\documentclass[onecolumn,pra,longbibliography,nofootinbib]{revtex4-2}

\usepackage[dvips]{graphicx} 
\usepackage{amsfonts,amscd,amsmath,amsthm,amssymb}
\allowdisplaybreaks[4]
\usepackage{enumerate}
\usepackage{epsfig}
\usepackage{caption}
\usepackage{subcaption}
\usepackage{xcolor}
\usepackage[colorlinks = true]{hyperref}
\usepackage{physics}
\usepackage{epstopdf}
\usepackage{framed}
\usepackage{threeparttable,booktabs,multirow}
\usepackage{color}
\usepackage{comment}
\usepackage[ruled,vlined]{algorithm2e}
\usepackage[most]{tcolorbox}
\usepackage{tikz}

\graphicspath{{./figure/}}

\usepackage{tikz}
\usetikzlibrary{tikzmark, calc, fit, positioning}
\usetikzlibrary{shapes,arrows,arrows.meta}
\usetikzlibrary{quantikz2}

\newtheorem{theorem}{Theorem}
\newtheorem{lemma}{Lemma}
\newtheorem{corollary}{Corollary}

\newtheorem{definition}{Definition}

\newtcolorbox[auto counter]{mybox}[2][]{
	enhanced,
	breakable,
	colback=blue!5!white,
	colframe=blue!75!black,
	fonttitle=\bfseries,
	title=Box \thetcbcounter: #2,#1
}

\newcommand{\CNOT}{\mathrm{CNOT}}

\begin{document}
\preprint{APS/123-QED}

\title{Constant-depth adaptive preparation of Dicke and symmetric states}
\author{Rui Luo}
\affiliation{Center for Quantum Information, Institute for Interdisciplinary Information Sciences, Tsinghua University, Beijing, 100084 China}
\author{Junjie Chen}
\affiliation{Center for Quantum Information, Institute for Interdisciplinary Information Sciences, Tsinghua University, Beijing, 100084 China}
\author{Xiongfeng Ma}
\email{xma@tsinghua.edu.cn}
\affiliation{Center for Quantum Information, Institute for Interdisciplinary Information Sciences, Tsinghua University, Beijing, 100084 China}

\begin{abstract}

Efficient preparation of Dicke states and, more generally, permutation-symmetric states is important for quantum metrology, quantum networking, and collective quantum information processing. Measurements and classical feedforward enable low-depth preparations of these states, with a cost of ancillary qubits. In this work, we introduce an exact constant-depth adaptive preparation protocol for arbitrary Dicke-$(n,k)$ states and further symmetric states. We first provide a protocol preparing the uniform subset superposition state, as a primitive, using constant-depth adaptive circuit with $O(k^2\log^2 n)$ ancillary qubits and success probability at least $1/k$. This yields an exact, probabilistic, constant-depth Dicke-state preparation protocol using $O\left(n^2+k^2\log^2 n+kn\log n\log\log n\right)$ ancillary qubits. Parallel repetition suppresses the failure probability exponentially without increasing the quantum depth. Moreover, the uniform subset superposition state is also of independent interest as the uniform vertex state of the Johnson graph and as the compact uniform subset state appearing in quantum-walk and topological-data-analysis algorithms. We further establish a general lifting framework that coherently combines clean unitary Dicke-state preparation circuits to prepare arbitrary symmetric states with only polynomial ancillary overhead. Combined with recent constant-depth unitary Dicke-state constructions, this gives an exact constant-depth preparation protocol for arbitrary $n$-qubit symmetric states using $O(n^3\sqrt{\log n})$ ancillary qubits.
\end{abstract}

\maketitle


\section{Introduction}

Efficient quantum-state preparation is a fundamental primitive in quantum computing and quantum information processing. The ability to prepare high-fidelity entangled states underlies many tasks, including quantum simulation~\cite{blatt2012quantum,Bloch2008Manybody}, fault-tolerant quantum computation~\cite{shor1995scheme,gottesman1997stabilizer,nielsen2010quantum}, measurement-based quantum computing~\cite{raussendorf2003measurement,briegel2009measurement}, and quantum communication protocols~\cite{gisin2007quantum,Cleve1999How,Lu2016Secret}. As quantum devices continue to scale, reducing circuit depth while keeping other resource overheads manageable becomes increasingly important, since shallow circuits shorten the coherent execution time and limit the accumulation of noise.

Among multipartite entangled states, Dicke states and more general symmetric states form a natural family characterized by permutation symmetry. The $n$-qubit Dicke state with $k$ excitations is the uniform superposition of all $n$-bit strings of Hamming weight $k$, namely
\begin{equation}\label{eq:Dicke}
    \ket{D_k^n}=\binom{n}{k}^{-1/2}\sum_{\substack{x\in\{0,1\}^n\\ |x|=k}}\ket{x},
\end{equation}
where $|x|=\sum_{i=1}^{n}x_i$ denotes the Hamming weight of the bit string $x$. These states exhibit genuine multipartite entanglement~\cite{Zhou2019decomposition} and have been widely studied as resources for quantum metrology, where they can enable sensitivities beyond the standard quantum limit~\cite{Pezze2018metrology,Holland1993Heisenberg}, as well as for quantum networks~\cite{Prevedel2009network,Chiuri2012network} and entanglement distribution~\cite{Roga2023distribution}. They have also acquired a direct algorithmic role: decoded quantum interferometry begins by preparing a prescribed superposition of Dicke states and uses it as the interference resource for structured optimization~\cite{Jordan2025DQI}. More generally, symmetric states are precisely the $n$-qubit states invariant under arbitrary permutations of the qubits. Since the Dicke states form an orthonormal basis of the symmetric subspace, every symmetric pure state can be written as 
\begin{equation}\label{eq:symmetric}
    \ket{\psi}=\sum_{k=0}^{n}a_k\ket{D_k^n}.
\end{equation}
Preparing such states is relevant for tasks and platforms governed by collective degrees of freedom~\cite{marconi2025symmReview}, including collective-spin probes in quantum metrology~\cite{Pezze2018metrology}, permutation-invariant encodings~\cite{Ouyang2014permutationInv,Ouyang2016permutationInv}, and photonic or spin-ensemble implementations of multipartite entanglement~\cite{Kiesel2010multiEntangle}.

Despite their broad relevance, preparing Dicke and symmetric states at low quantum depth remains challenging. In the standard unitary circuit model with bounded-fan-in gates, establishing correlations across the entire system generally requires nonconstant depth, and logarithmic-depth lower bounds arise for representative long-range-entangled states~\cite{Bartschi_2022_dicke,yuan2025depthefficientquantumcircuitsynthesis,Cruz_2019_ghzw}. Intermediate measurements and classical feedforward can circumvent this light-cone restriction by inducing and subsequently correcting long-range correlations~\cite{liu2025lowdepthquantumsymmetrization,zi2025constant,liu2025state}. Equivalently, adaptive circuits can implement nonlocal primitives such as unbounded quantum fan-out in constant quantum depth~\cite{takahashi2016collapse}, enabling a broad class of shallow state-preparation protocols~\cite{Piroli_2024_approx,buhrman2024state,zi2025constant}. Several low-depth approaches to Dicke-state preparation based on adaptive circuits or unbounded quantum fan-out gates have been developed, as summarized in Table~\ref{tab:dicke}. 

In this work, we develop a constant-depth adaptive approach based on the uniform subset superposition (USS),
\begin{equation}
    \ket{\mathrm{USS}_{n,k}}=\binom{n}{k}^{-1/2}\sum_{0\leq j_1<\cdots<j_k\leq n-1}
    \ket{j_1}\cdots\ket{j_k},
\end{equation}
which coherently encodes all $k$-element subsets by their increasingly ordered elements. Besides its role in our construction, the USS is the uniform vertex state of the Johnson graph underlying quantum-walk algorithms for element distinctness and subset finding~\cite{ambainis2007quantumwalk,childs2005quantumalgorithms}. Beyond these cases, the same uniform superposition over fixed-size subsets serves as the uniform set-register initial state in Ambainis-style algorithms for graph collision, triangle finding, and related graph-property problems~\cite{magniez2007triangle,Jeffery2013walk}. It also coincides with the compactly encoded uniform superposition over candidate simplices used in quantum algorithms for topological data analysis~\cite{McArdle2026streamlined}. We show that the USS can be prepared exactly in constant adaptive depth using $O\left(k^2\log^2 n\right)$ ancillary qubits and polynomial-time classical processing, with success probability at least $1/k$. This yields an exact constant-depth Dicke-state preparation protocol using $O\left(n^2+k^2\log^2 n+kn\log n\log\log n\right)$ ancillary qubits, with parallel repetition reducing the failure probability exponentially. Our construction provides an adaptive ordering-based alternative to deterministic Grover-search approaches. Comparisons with previous works are listed in Table~\ref{tab:dicke}.

We further establish a general lifting framework for arbitrary symmetric-state preparation. Rather than designing a separate circuit for each symmetric state, our framework shows how to coherently combine a family of clean unitary Dicke-state preparation circuits. If every Dicke state can be prepared with depth $O(L)$ using at most $O(\xi)$ ancillary qubits, then an arbitrary $n$-qubit symmetric state can be prepared with also $O(L)$ depth and with $O(n^2\log n+n\xi)$ ancillary qubits. This separates the task of coherently combining different Hamming-weight sectors from that of preparing each individual Dicke state, and implies that any improvement in clean unitary Dicke-state preparation immediately leads to a corresponding improvement for general symmetric states. Combining this framework with recent constant-depth unitary Dicke-state constructions~\cite{joshi2026constantdepthunitarypreparationdicke}, we obtain an exact constant-depth preparation scheme for arbitrary $n$-qubit symmetric states using $O(n^3\sqrt{\log n})$ ancillary qubits. Comparisons with previous works are summarized in Table~\ref{tab:symm}.

The rest of the paper is organized as follows. In Section~\ref{sec:pre}, we introduce the adaptive circuit model and the quantum primitives used in our constructions. In Section~\ref{sec:dicke}, we present the constant-depth preparation of the USS state and its conversion into arbitrary Dicke states. In Section~\ref{sec:sym}, we establish the lifting framework from unitary Dicke-state preparation to arbitrary symmetric-state preparation. Section~\ref{sec:con_dis} concludes with a discussion of applications and open problems. Appendix~\ref{app:prob} provides a detailed analysis of the descent-number distribution and the associated success probability.

\begin{table}[t]
\centering
\caption{Comparison of Dicke-state preparation circuits.}
\label{tab:dicke}

\renewcommand{\arraystretch}{1.25}
\setlength{\tabcolsep}{6pt}

\resizebox{\linewidth}{!}{
\begin{tabular}{c|c|c|c|c|c|c|c}
\hline
\multirow{2}{*}{\textbf{Paper}}
& \multicolumn{5}{c|}{\textbf{Circuit}}
& \multicolumn{2}{c}{\textbf{Dicke State Output}} \\
\cline{2-8}
& \textbf{Type}
& \textbf{Connectivity}
& \textbf{Interactions}
& \textbf{Depth}
& \textbf{Ancillae}
& \textbf{Type}
& \textbf{Weight $(k)$} \\
\hline

\cite{B_rtschi_2019}
& Unitary
& 1D Chain
& $O(1)$-Width
& $O(n)$
& $0$
& E
& $[1,n/2]$ \\

\cite{Bartschi_2022_dicke}
& Unitary
& $(n_1\times n_2)$-Grid
& $O(1)$-Width
& $O(\sqrt{nk})$
& $0$
& E
& $[n_2/n_1,n/2]$ \\

\cite{yuan2025depthefficientquantumcircuitsynthesis}
& Unitary
& $(n_1\times n_2)$-Grid
& $O(1)$-Width
& $O\!\left(k\log\frac{n}{k}+n_2\right)$
& $0$
& E
& $[n_2/n_1,n/2]$ \\

\cite{yuan2025depthefficientquantumcircuitsynthesis}
& Unitary
& $(n_1\times n_2)$-Grid
& $O(1)$-Width
& $O(n_2)$
& $0$
& E
& $[1,n_2/n_1]$ \\
\hline
\noalign{\vskip 3pt}
\hline

\cite{Cruz_2019_ghzw}
& Unitary
& All-to-All
& $O(1)$-Width
& $O(\log n)$
& $0$
& E
& $1$ \\

\cite{Bartschi_2022_dicke}
& Unitary
& All-to-All
& $O(1)$-Width
& $O\!\left(k\log\frac{n}{k}\right)$
& $0$
& E
& $[1,n/2]$ \\

\cite{liu2025lowdepthquantumsymmetrization}
& Unitary
& All-to-All
& $O(1)$-Width
& $O\!\left(\log^3 n\cdot\log\log n\right)$
& $O(n\log n)$
& E
& $[1,n/2]$ \\

\cite{liu2025lowdepthquantumsymmetrization}
& Prob.
& All-to-All
& $O(1)$-Width
& $O\!\left(\log n\cdot\log\log n\right)$
& $O(n\log n)$
& E
& $[1,n/2]$ \\

\cite{yuan2025depthefficientquantumcircuitsynthesis}
& Unitary
& All-to-All
& $O(1)$-Width
& $O\!\left(\log k\log\frac{n}{k}+k\right)$
& $0$
& E
& $[1,n/2]$ \\
\hline
\noalign{\vskip 3pt}
\hline

\cite{buhrman2024state}
& LAQCC
& Local Grid
& $O(1)$-Width
& $O(1)^{*}$
& $O(n\log n)$
& E
& $1$ \\

\cite{buhrman2024state}
& LAQCC
& Local Grid
& $O(1)$-Width
& $O(1)^{*}$
& $O(n^2\log n)$
& E
& $O(\sqrt{n})$ \\

\cite{buhrman2024state}
& LAQCC
& 1D Chain
& $O(1)$-Width
& $O(\log n)^{*}$
& $\Omega(n^2)$
& E
& $[1,n/2]$ \\

\cite{Yu_2026_dicke}
& Adaptive
& All-to-All
& $O(1)$-Width
& $\operatorname{polylog}(n)^{*}$
& $O(\log n)$
& E
& $[1,n/2]$ \\
\hline
\noalign{\vskip 3pt}
\hline

\cite{farrell2025digital}
& LOCC
& Grid
& $O(1)$-Width
& $O(1)^{*}$
& $0$
& A
& $1$ \\

\cite{Piroli_2024_approx}
& LOCC
& Grid
& $O(1)$-Width
& $O(\log k)^{*}$
& $O(1)$
& A
& $[1,n/2]$ \\

\cite{Piroli_2024_approx}
& LOCC
& Grid
& $O(1)$-Width
& $O(1)^{*}$
& $O(\log k)$
& A
& $[1,n/2]$ \\
\hline
\noalign{\vskip 3pt}
\hline

\cite{grier2026qac0}
& Unitary
& All-to-All
& Global CZ
& $O(1)$
& $\operatorname{poly}(n)$
& E
& $1$ \\

\cite{joshi2026constantdepthunitarypreparationdicke}
& Unitary
& All-to-All
& Global CZ
& $O(1)$
& $O(n^{k+1})$
& E
& $O(1)$ \\

\cite{joshi2026constantdepthunitarypreparationdicke}
& Unitary
& All-to-All
& Global CZ
& $O(1)$
& $O(1)$
& A
& $1$ \\

\cite{joshi2026constantdepthunitarypreparationdicke}
& Unitary
& All-to-All
& Global FAN-OUT
& $O(1)$
& $O\!\left(n^2\sqrt{\log n}\right)$
& E
& $[1,n/2]$ \\
\hline
\noalign{\vskip 3pt}
\hline

Thm.~\ref{thm:Dicke}
& Adaptive
& All-to-All
& $O(1)$-Width
& $O(1)^{*}$
& $O\!\left(
    n^{2}
    + k^{2}\log^{2}n
    + kn\log n\log\log n
  \right)$
& E
& $[1,n/2]$ \\

Cor.~\ref{cor:dicke}
& Adaptive
& All-to-All
& $O(1)$-Width
& $O(1)^{*}$
& $O\!\left(
    n^{2}
    + c\alpha k^{2}\log^{2}n
    + kn\log n\log\log n
  \right)$
& E
& $[1,n/2]$ \\
\hline
\end{tabular}
}

\vspace{4pt}
\begin{minipage}{\textwidth}
\footnotesize
\textit{Notes.}
The table is organized primarily following the summary in
Ref.~\cite{joshi2026constantdepthunitarypreparationdicke}, supplemented by
our results. The circuit types include Unitary, LAQCC (Local Adaptive Quantum
Circuits with Classical Communication~\cite{buhrman2024state}), LOCC (Local Operations and Classical Communication), Prob. (Probabilistic), and
Adaptive. The labels E and A denote exact and bounded-error approximate
preparation, respectively. For non-unitary circuit models involving nontrivial classical post-processing (having a ($\ast$) following the depth),
the reported quantum depth excludes the additional classical circuit depth.
Our protocols are probabilistic. Theorem~\ref{thm:Dicke} succeeds with
probability at least $1/k$, whereas Corollary~\ref{cor:dicke} has a failure
probability exponentially small in the constant $c$. Here,
$\alpha\leq k$ and is expected to scale as $O(\sqrt{k})$ according to
Appendix~\ref{app:prob}. Conditioned on success, both protocols are exact.
Although our protocols use quantum fan-out gates, these gates can be synthesized
from bounded-fan-in gates and mid-circuit measurements in constant depth, as
shown in Section~\ref{sec:pre}. Because fan-out is not treated as a fundamental
primitive, the interaction width is reported as $O(1)$.
\end{minipage}

\end{table}

\begin{table}[t]
\centering
\caption{Comparison of symmetric-state preparation circuits.}
\label{tab:symm}

\renewcommand{\arraystretch}{1.25}
\setlength{\tabcolsep}{6pt}

\resizebox{\linewidth}{!}{
\begin{tabular}{c|c|c|c|c|c|c|c}
\hline
\multirow{2}{*}{\textbf{Paper}}
& \multicolumn{5}{c|}{\textbf{Circuit}}
& \multicolumn{2}{c}{\textbf{Symmetric State Output}} \\
\cline{2-8}
& \textbf{Type}
& \textbf{Connectivity}
& \textbf{Interactions}
& \textbf{Depth}
& \textbf{Ancillae}
& \textbf{Type}
& \textbf{Support} \\
\hline

\cite{B_rtschi_2019}
& Unitary
& 1D Chain
& $O(1)$-Width
& $O(n)$
& $0$
& E
& Arbitrary \\

\cite{Bartschi_2022_dicke}
& Unitary
& All-to-All
& $O(1)$-Width
& $O\!\left(k\log\frac{n}{k}\right)$
& $0$
& E
& $[0,k]$ \\

\cite{Bartschi_2022_dicke}
& Unitary
& Grid
& $O(1)$-Width
& $O(\sqrt{nk})$
& $0$
& E
& $[0,k]$ \\
\hline
\noalign{\vskip 3pt}
\hline

\cite{liu2025lowdepthquantumsymmetrization}
& Unitary
& All-to-All
& $O(1)$-Width
& $O(\log^{3}n\cdot\log\log n)$
& $O(n\log n)$
& E
& Arbitrary\footnotemark[1] \\

\cite{bond2025global}
& Variational
& Collective
& Global OAT + rotations
& $\left\lceil 2n/3\right\rceil+O(1)$
& $0$
& Var.
& Arbitrary \\
\hline
\noalign{\vskip 3pt}
\hline

\cite{gretta2026qac0}
& Unitary
& All-to-All
& Global Toffoli
  $+\operatorname{FANOUT}_{k}$
& $O(1)$
& $\operatorname{poly}(n)$
& E
& $[0,k]$ \\

\cite{gretta2026qac0}
& Unitary
& All-to-All
& Global Toffoli
  $+\operatorname{FANOUT}_{n}$
& $O(1)$
& $\operatorname{poly}(n)$
& E
& Arbitrary \\
\hline
\noalign{\vskip 3pt}
\hline

Thm.~\ref{thm:symm}
& Unitary
& All-to-All
& Global FAN-OUT
& $O(L)$
& $O\!\left(n^{2}\log n+n\xi\right)$
& E
& Arbitrary \\

Cor.~\ref{cor:symmetric}
& Unitary
& All-to-All
& Global FAN-OUT
& $O(1)$
& $O\!\left(n^{3}\sqrt{\log n}\right)$
& E
& Arbitrary \\
\hline

\end{tabular}
}
\footnotetext[1]{The quoted complexity assumes an input state of the form
$\sum_{k=0}^{n}a_k|0^{n-k}1^k\rangle$;
the cost of preparing the coefficients $\{a_k\}$ is not included.}

\vspace{4pt}
\begin{minipage}{\textwidth}
\footnotesize
\textit{Notes.}
The labels E and Var.\ denote exact and variational preparations,
respectively.
The support $[0,k]$ means that the output may be an arbitrary state
$\sum_{j=0}^{k}a_j\lvert D_j^n\rangle$ supported on Dicke weights at most $k$. The quantities $L$ and $\xi$ in Theorem~\ref{thm:symm} denote, respectively, the
depth and ancillary-qubit complexity of the underlying clean unitary Dicke-state preparation circuits.
\end{minipage}

\end{table}

\section{Preliminary}\label{sec:pre}

\subsection{Notations and Adaptive Circuits}
In this subsection, we introduce some basic notations and the circuit model used throughout this work. The Hilbert space of an $n$-qubit system is defined as $\mathcal{H}=\bigotimes_{i=1}^{n}\mathcal{H}_i$, where each $\mathcal{H}_i$ is a two-dimensional Hilbert space spanned by computational basis $\{\ket{0},\ket{1}\}$. A quantum gate acting on $\mathcal{H}$ is a unitary operation $U$. Our elementary quantum operations consist of bounded-fan-in unitary gates and single-qubit measurements. For convenience, we will also use the unbounded quantum fan-out gate as a derived primitive; as discussed below, it admits a constant-depth implementation in the adaptive circuit model.

We use the standard conventions for single-qubit Pauli matrices:
\begin{equation}
    X=\begin{pmatrix}
        0 & 1 \\ 1 & 0
    \end{pmatrix},\qquad
    Y=\begin{pmatrix}
        0 & -i \\ i & 0
    \end{pmatrix},\qquad
    Z=\begin{pmatrix}
        1 & 0 \\ 0 & -1
    \end{pmatrix}.
\end{equation}
We also use the Hadamard gate $H$, the phase gate $S$, and the $\pi/8$ gate $T$, defined by
\begin{equation}
    H=\frac{1}{\sqrt{2}}\begin{pmatrix}
        1 & 1 \\ 1 & -1
    \end{pmatrix},\qquad
    S=\begin{pmatrix}
        1 & 0 \\ 0 & i
    \end{pmatrix},\qquad
    T=\begin{pmatrix}
        1 & 0 \\ 0 & e^{i\pi/4}
    \end{pmatrix}.
\end{equation}
The two-qubit $\CNOT$ gate acts on computational-basis states as
\begin{equation}
    \CNOT:\ket{x}\ket{y}\rightarrow\ket{x}\ket{y\oplus x}.
\end{equation}
$\{\CNOT,H,T\}$ together forms a universal gate set, and $\{\CNOT,H,S\}$ generates the Clifford group.

Unless otherwise specified, all data and ancilla qubits are initialized in the state $\ket{0}$ and measured in the computational basis. An adaptive quantum circuit consists of alternating layers of bounded-fan-in unitary gates and intermediate measurements. Measurement outcomes may be processed classically and used to determine subsequent operations through classical feedforward, after which the measured qubits may be discarded. We define the adaptive quantum depth as the total number of unitary and measurement layers along the longest execution path. Classical processing between quantum layers is not included in this quantum-depth measure, and its computational complexity will be analyzed separately. The principal resources considered in this work are therefore the adaptive quantum depth and the total number of ancilla qubits.

Adaptive circuits are particularly useful for preparing states with long-range correlations~\cite{Cruz_2019_ghzw,zi2025constant}. A typical example is the GHZ states $\ket{\mathrm{GHZ}_n}\equiv\frac{1}{\sqrt{2}}\left(\ket{0}^{\otimes n}+\ket{1}^{\otimes n}\right)$. Such a state can be directly prepared by applying the unbounded quantum fan-out gate
\begin{equation}\label{eq:fanout}
    F_{n-1}:\ket{x_0}\ket{x_1,\ldots,x_{n-1}}\rightarrow\ket{x_0}\ket{x_1\oplus x_0,\ldots,x_{n-1}\oplus x_0},
\end{equation}
on the $\ket{+}\ket{0}^{\otimes (n-1)}$ state, where $\ket{x_i}$'s for $0\leq i\leq {n-1}$ are computational basis states.

Any unitary implementation of $F_m$ using only bounded-fan-in gates requires depth $\Omega(\log m)$~\cite{watts2019exponential}. By contrast, when intermediate measurements and classical feedforward are allowed, the fan-out operation can be implemented in constant adaptive depth~\cite{Cruz_2019_ghzw}. The underlying idea is to first prepare constant-size GHZ blocks in parallel and then connect these blocks using a single round of intermediate measurements. The measurement outcomes determine a final layer of Pauli corrections that aligns the relative branches of the local GHZ states, thereby producing a globally correlated state. An explicit example of this construction for a nine-qubit GHZ state is shown in Fig.~\ref{fig:9GHZ}.

\begin{figure}[t]
\centering
\captionsetup{justification=justified, singlelinecheck=false}
\begin{quantikz}
  \lstick{$\ket{\psi}$} && \ctrl{2}\slice{1} &&& \slice{2} && \\
  \lstick{$\ket{0}$} && \targ{} &&&&& \\
  \lstick{$\ket{0}$} && \targ{} & \ctrl{1} &&&& \\
  \lstick{$\ket{0}$} &&& \targ{} & \targ{} & \meter{c_1} & \ctrl[vertical wire=c]{3}\setwiretype{c}\wire[d]{c} \\
  \lstick{$\ket{0}$} & \gate{H} & \ctrl{2} && \ctrl{-1} && \gate{X^{c_1}}\wire[d]{c} & \\
  \lstick{$\ket{0}$} && \targ{} &&&& \gate{X^{c_1}}\wire[d]{c} & \\
  \lstick{$\ket{0}$} && \targ{} & \ctrl{1} &&& \gate{X^{c_1}}\wire[d]{c} & \\
  \lstick{$\ket{0}$} &&& \targ{} & \targ{} & \meter{c_2} & \ctrl[vertical wire=c]{3}\setwiretype{c}\wire[d]{c} \\
  \lstick{$\ket{0}$} & \gate{H} & \ctrl{2} && \ctrl{-1} && \gate{X^{c_1\oplus c_2}}\wire[d]{c} & \\
  \lstick{$\ket{0}$} && \targ{} &&&& \gate{X^{c_1\oplus c_2}}\wire[d]{c} & \\
  \lstick{$\ket{0}$} && \targ{} &&&& \gate{X^{c_1\oplus c_2}} &
\end{quantikz}
\caption{The quantum circuit for preparing $9$-qubit GHZ states given in~\cite{zi2025constant}. This circuit first locally prepares constant-weight GHZ states before slice $1$. Then, before slice $2$, the $\CNOT$ gates and the measurements $c_1$, $c_2$ are used to establish long-range correlations between neighboring local blocks. Finally, one layer of post-measurement operations is applied to correct some flips.}
\label{fig:9GHZ}
\end{figure}

In this work, we mainly consider adaptive circuit preparation tasks for Dicke states, defined in Eq.~\eqref{eq:Dicke} and symmetric states. By symmetric states, we refer to the states that remain unchanged when the order of qubits is permuted. Mathematically, let $\Pi_\pi$ denote the unitary operator that permutes the qubits according to $\pi\in S_n$. An $n$-qubit pure state $\ket{\psi}$ is permutation symmetric if $\Pi_{\pi}\ket{\psi}=\ket{\psi}$ for any permutation operator $\Pi_{\pi}$. Since the Dicke states $\{\ket{D_k^n}\}_{k=0}^{n}$ form an orthonormal basis of the symmetric subspace, every symmetric pure state $\ket{\psi}$ admits the expansion in Eq.~\eqref{eq:symmetric}.

\subsection{Useful Quantum Gates Based on Unbounded Quantum Fan-out}
The unbounded quantum fan-out gate defined in Eq.~\eqref{eq:fanout} is a powerful primitive: when combined with bounded-fan-in gates, it enables constant-depth implementations of a broad class of nonlocal quantum operations. Since unbounded fan-out itself admits a constant-depth implementation in the adaptive circuit model considered here, each construction summarized below can be converted into a constant-depth adaptive circuit consisting only of bounded-fan-in gates, intermediate measurements, and classical feedforward. Table~\ref{tab:gates} lists the gates used in our protocols, with their actions specified on computational-basis states.

\begin{table}[t]
\centering
\caption{Useful quantum gates based on unbounded quantum fan-out.}
\label{tab:gates}

\renewcommand{\arraystretch}{1.25}
\setlength{\tabcolsep}{6pt}

\resizebox{\linewidth}{!}{
\begin{tabular}{c|l|c|c|c}
\hline

\textbf{Gate}
& \multicolumn{1}{c|}{\textbf{Operation}}
& \textbf{Width}
& \textbf{Type}
& \textbf{Implementation} \\
\hline
    
$\text{OR}_n$ 
& $\ket{x_1}\cdots\ket{x_n}\ket{x_0}\rightarrow \ket{x_1}\cdots\ket{x_n}\ket{x_0\oplus \text{OR}_n(x)}$ 
& $O(n\log n)$ 
& E
& \cite{takahashi2016collapse}, Theorem 1.1 \\

$\text{AND}_n$ 
& $\ket{x_1}\cdots\ket{x_n}\ket{x_0}\rightarrow \ket{x_1}\cdots\ket{x_n}\ket{x_0\oplus \text{AND}_n(x)}$ 
& $O(n\log n)$ 
& E
& directly from OR$_n$ \\

$\text{Equal}_i$ 
& $\ket{j}\ket{x_0}\rightarrow\begin{cases}
    \ket{j}\ket{x_0\oplus 1}, & \text{if }j=i \\
    \ket{j}\ket{x_0}, & \text{else}
\end{cases}$ 
& $O(n\log n)$ 
& E
& directly from OR$_n$ \\

$\text{Exact}_t$ 
& $\ket{x_1}\cdots\ket{x_n}\ket{x_0}\rightarrow \ket{x_1}\cdots\ket{x_n}\ket{x_0\oplus\mathbb{I}_{|x|=t}}$ 
& $O(n\log n)$ 
& E
& directly from OR$_n$ \\

Hammingweight 
& $\ket{x_1}\cdots\ket{x_n}\ket{0}\rightarrow \ket{x_1}\cdots\ket{x_n}\ket{|x|}$ 
& $O(n^2)$ 
& E
& \cite{takahashi2016collapse}, Lemma 4.2 \\
    
&
& $O(n\log n)$ 
& A 
& \cite{hoyer2005quantum}, Theorem 6.9 \\

$\text{Threshold}_t$ 
& $\ket{x_1}\cdots\ket{x_n}\ket{x_0}\rightarrow \ket{x_1}\cdots\ket{x_n}\ket{x_0\oplus\mathbb{I}_{|x|\geq t}}$ 
& depends on $t$ 
& E
& \cite{takahashi2016collapse}, Theorem 1.2 \\

&
& $O(n\log n)$ 
& A 
& \cite{hoyer2005quantum}, Remark 6.10 \\

$\text{Add}_n$ 
& $\ket{x}\ket{y}\rightarrow\ket{x}\ket{y+x\text{ mod }2^n}$ 
& $O(n^2)$ 
& E
& AC$^0$ circuit \\

Equality 
& $\ket{x}\ket{y}\ket{0}\rightarrow\ket{x}\ket{y}\ket{\mathbb{I}_{x=y}}$ 
& $O(n^2)$ 
& E
& \cite{buhrman2024state}, Appendix B.2 \\

Greaterthan 
& $\ket{x}\ket{y}\ket{0}\rightarrow\ket{x}\ket{y}\ket{\mathbb{I}_{x>y}}$ 
& $O(n^2)$ 
& E
& \cite{buhrman2024state}, Appendix B.3 \\
\hline
\end{tabular}
}

\vspace{4pt}
\begin{minipage}{\textwidth}
\footnotesize
\textit{Notes.}
Those gates are all realized by constant-depth adaptive circuits, but with different number of ancillae. The width denotes the total number of qubits involved, including the input, output, and ancillary registers. The labels E and A indicate exact and bounded-error approximate implementations, respectively. For the exact implementation of $\mathrm{Threshold}_t$, the width is $O(n\log n)$ for any $1\le\tilde{t}\le\log n$ and $O\left(n\sqrt{\tilde{t}\log n}\right)$ for any $\log n\le\tilde{t}\le\frac{n}{2}$, where $\tilde{t}=\min\{t,n-t\}$.
\end{minipage}

\end{table}

In Table~\ref{tab:gates}, $x$, $y$ and $j$ are short for the bit strings $(x_1\cdots x_n)$, $(y_1\cdots y_n)$ and $(j_1\cdots j_n)$, and $\mathbb{I}_A$ is an indicator which takes value $1$ if the Boolean expression $A$ is true, and $0$ otherwise. The construction for the $\text{OR}_n$ gate is provided in~\cite{takahashi2016collapse}, upon which the circuits for the $\text{AND}_n$ gate and $\text{Equal}_i$ gate are naturally built. The $\text{Exact}_t$ can be implemented through a slight modification in the circuit for $\text{OR}_n$, as noted at the end of Sec.~3 of Ref.~\cite{takahashi2016collapse}. The same reference provides exact constant-depth constructions for the counting operation, which implements $\mathrm{Hammingweight}$, and for $\mathrm{Threshold}_t$. However, if we allow an inverse-polynomial approximation error, the width for implementing $\text{Hammingweight}$ and $\text{Threshold}_t$ can be reduced, as shown in~\cite{hoyer2005quantum}. Equipped with $\text{OR}_n$ and $\text{AND}_n$, the $\text{Add}_n$ can be constructed in constant depth simulating the classical AC$^0$ architecture, based on which the Equality gate and the Greaterthan gate arise, as shown in Appendix~B of~\cite{buhrman2024state}.
Throughout this work, the Greaterthan, Equality, $\text{Equal}_i$ and Hammingweight gates will be used frequently.

We note a discrepancy in the resource accounting of Ref.~\cite{buhrman2024state}. Table~5 of that reference assigns width $O(n\log n)$ to the exact $\mathrm{Hammingweight}$ gate and cites Lemma~4 of Ref.~\cite{takahashi2016collapse}. The cited lemma, however, establishes an exact constant-depth counting circuit of size $O(n^2)$, and the fully parallel construction given in its proof uses $O(n^2)$ qubits. Thus, the quoted $O(n\log n)$ width does not follow from the cited construction. We therefore use the exact width $O(n^2)$ throughout this work.
This observation does not preclude the existence of a different exact construction with width $O(n\log n)$, but such a construction is not provided by the cited result.

\section{A Constant-Depth Preparation Scheme for Dicke States}\label{sec:dicke}
In this section, we present a constant-depth adaptive protocol for preparing Dicke states. At a schematic level, the protocol can be written as
\begin{equation}
    \text{Circuit}_{Dicke}=Q_2\circ C_2\circ M\circ Q_1\circ C_1,
    \label{eq:circuit}
\end{equation}
where the operations are applied from right to left. Here, $C_1$ and $C_2$ denote classical computations, $Q_1$ and $Q_2$ denote constant-depth adaptive quantum subcircuits, and $M$ is the distinguished intermediate measurement layer used in the preparation of the index register introduced below. Both classical stages can be performed in polynomial time, while the total adaptive quantum depth of $Q_1$, $M$, and $Q_2$ is $O(1)$. The layouts of $Q_1$ and $Q_2$ may depend on the outputs of $C_1$ and $C_2$, respectively. The subcircuits $Q_1$ and $Q_2$ may themselves contain the intermediate measurements required to implement quantum fan-out gates; these measurements are deterministic components of the corresponding adaptive implementations and are distinct from the measurement layer $M$ responsible for the heralded part of our protocol.

We first state the main result of this section.

\begin{theorem}[Constant-Depth Dicke-State Preparation]\label{thm:Dicke}
    For any $n$ and $0\leq k\leq n$, the Dicke-$(n,k)$ state can be exactly prepared using $O(1)$ depth quantum computation with $O\left(n^2+k^2\log^2 n+kn\log n\log\log n\right)$ ancillary qubits and polynomial-time classical computation, with success probability at least $1/k$.
\end{theorem}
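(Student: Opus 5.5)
The proof goes through the uniform subset superposition. We first prepare $\ket{\mathrm{USS}_{n,k}}$ exactly and heralded, in constant adaptive depth. We then convert the ordered index registers $\ket{j_1}\cdots\ket{j_k}$ into the $n$-qubit string with ones at positions $j_1,\dots,j_k$, and finally uncompute the index registers.

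\textbf{Step 1: preparing the USS.} We use $k$ registers of $\lceil\log n\rceil$ qubits each. Each register is placed in the uniform superposition over $\{0,\dots,n-1\}$. When $n$ is not a power of two we use a heralded or exact constant-depth uniform-superposition primitive. This gives the state $n^{-k/2}\sum_{j\in[n]^k}\ket{j_1}\cdots\ket{j_k}$. The key idea is to sort this superposition using only measurement and classical feedforward, not a sorting network.

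For every ordered pair $(a,b)$ with $a<b$ we apply Greaterthan and Equality gates in parallel. We copy the registers with fan-out first so that all $O(k^2)$ comparisons act simultaneously. Each comparator costs $O(\log^2 n)$ qubits, for $O(k^2\log^2 n)$ in total.

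The distinguished measurement layer $M$ then measures a coarse record of the relative order, namely the descent pattern: for each $a$, whether $j_a>j_{a+1}$. Post-selecting on all Equality bits being zero, the outcome restricts the state to a uniform superposition over tuples with distinct entries whose descent set is fixed. The classical stage $C_2$ then computes a data-independent relabeling. Conditioned on a descent set with $d$ descents, the tuple splits into $d+1$ increasing runs. Because the superposition is uniform, the map $\{\text{tuples with that descent set}\}\to\{k\text{-subsets}\}$ is uniform-to-one exactly when merging is canonical.

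Here I expect to argue that success means zero descents (or a single run after a fixed reversal), and that the probability of this, accounting for symmetrization over the possible run structures, is at least $1/k$. This is the role of the descent-number analysis in Appendix~\ref{app:prob}, via Eulerian-number bounds. Ties ($j_a=j_b$) are handled by the same heralding. Conditioned on success the registers are exactly $\ket{\mathrm{USS}_{n,k}}$, after we uncompute the comparator ancillas by running the comparators in reverse.

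\textbf{Step 2: conversion to the Dicke state.} For each position $i\in\{0,\dots,n-1\}$ and each register $a$ we compute $\mathrm{Equal}_i$ on $\ket{j_a}$ and XOR the result into data qubit $i$. The indices are distinct, so qubit $i$ ends up as $\mathbb{I}_{i\in\{j_a\}}$, giving $\binom{n}{k}^{-1/2}\sum\ket{j_1}\cdots\ket{j_k}\ket{x(j)}$ with $|x|=k$. We need $kn$ parallel $\mathrm{Equal}_i$ gates on $\log n$-bit inputs. Each costs $O(\log n\log\log n)$ qubits, together with fan-out copies of the registers, for a total of $O(kn\log n\log\log n)$.

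\textbf{Step 3: disentangling the index registers.} We must reconstruct $j_a$ coherently from $x$. The value $j_a$ is the position of the $a$-th one in $x$, so we compute prefix Hamming weights $w_i=|x_0\cdots x_{i}|$. Using Hammingweight gates for all prefixes in parallel costs $O(n^2)$ qubits if shared as a counting structure; I would aim for this bound. We then XOR $i$ into register $a$ whenever $x_i=1$ and $w_i=a$. This requires Equality tests on $O(\log k)$-bit values, which are dominated by the earlier terms. This exactly clears the index registers, after which we uncompute the prefix counts. Summing the three stages gives the stated ancilla bound and $O(1)$ depth, with all classical processing polynomial. The main obstacle is Step 1: choosing which order information to measure so that the post-measurement state is exactly uniform over sorted tuples while the success probability provably stays at least $1/k$.
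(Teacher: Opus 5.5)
\textbf{Step 1 has a genuine gap, and it is the one you flag yourself.} Your Step 2 and the XOR-clearing at the end of Step 3 match the paper's filling and erasure stages. The problem is what you measure. If you measure only the positional descent pattern and post-select on distinct entries, only the strictly monotone branches are usable. For any descent set other than $\emptyset$ or $\{1,\ldots,k-1\}$, the interleaving of the runs is not determined, so no data-independent relabeling sorts the registers. The tuple-to-subset map is also not injective: $(3,1,2)$ and $(2,1,3)$ share the descent set $\{1\}$ and the subset $\{1,2,3\}$. Your success probability is therefore at most $2\binom{n}{k}/n^k<2/k!\le 1/k$ for every $k\ge 3$. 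The distinctness post-selection alone succeeds only with probability $n!/((n-k)!\,n^k)$, roughly $\sqrt{2\pi n}\,e^{-n}$ when $k=n$. No Eulerian-number bound can repair this, because in your scheme every non-monotone descent pattern is a failure.

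\textbf{The missing idea is that the paper never discards ties or non-monotone outcomes.} It draws the indices from a smaller range $\{0,\ldots,\eta-1\}$ and measures the full comparison record $\mathbb{I}_{j_a>j_b}$ for all $a<b$. The complementary bit $\mathbb{I}_{j_b\ge j_a}$ breaks ties by register position. So every outcome fixes the complete sorting permutation $r=(r_1,\ldots,r_k)$, and the registers can be relabeled by $r$ at no quantum cost.

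An equality $j_{r_i}=j_{r_{i+1}}$ is consistent with the record exactly when $r_i<r_{i+1}$. Adding the classically known offsets $a_i=|\{s<i:r_s<r_{s+1}\}|$ maps the surviving weakly increasing tuples bijectively onto strictly increasing tuples in $\{0,\ldots,\eta+k-2-d\}$, where $d=\mathrm{des}(r)$. Thus every outcome yields an exact $\ket{\mathrm{USS}_{\eta+k-1-d,k}}$, with probability $p(d)=\binom{\eta+k-1-d}{k}A(k,d)/\eta^k$. Choose $\eta\in[n-k+1,n]$ so that the most likely value $d(\eta)$ satisfies $\eta+k-1-d(\eta)=n$. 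Pigeonhole over the $k$ possible descent numbers then gives success probability $p(d(\eta))\ge 1/k$. This is where the Eulerian numbers actually enter.

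\textbf{A secondary gap is in your Step 3.} Running $n$ exact prefix Hammingweight gates in parallel costs $\sum_{m\le n}O(m^2)=O(n^3)$ qubits, and the ``shared counting structure'' you mention is exactly what must be supplied. The paper computes all dyadic block weights using $\sum_s (N/2^s)\,O(4^s)=O(n^2)$ qubits, with $N=2^{\lceil\log_2 n\rceil}$. It then forms each prefix rank as a sum of $O(\log n)$ block weights by exact iterated addition. That addition is a threshold circuit, made constant-depth with fan-out via $\mathrm{QNC}^0_f=\mathrm{QTC}^0_f$, and uses only $n\,\mathrm{polylog}(n)$ further qubits.
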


Ref.~\cite{buhrman2024state} gave a constant-depth adaptive construction for $k=O(\sqrt{n})$ using $O(n^3)$ ancillary qubits, together with an $O(\log n)$-depth construction for arbitrary $k$. In our protocol, for arbitrary $0\le k\le n$, the ancillary-qubit complexity in Theorem~\ref{thm:Dicke} remains below $O(n^2\log^2 n)$, which is an improvement to the constructions in Ref.~\cite{buhrman2024state}. More recently, Ref.~\cite{joshi2026constantdepthunitarypreparationdicke} obtained a constant-depth unitary construction for arbitrary excitation number using $O(n^2\sqrt{\log n})$ ancillary qubits. In our protocol, when $k$ is of the order $O(n/\log^{3/4}n)$, the total resource estimation is below $O(n^2\sqrt{\log n})$. The purpose of Theorem~\ref{thm:Dicke} is therefore to provide a constant-depth protocol valid for all $k$ (although not the first one) with SOTA resource estimation at a specific range, and to establish an alternative construction based on a substantially different mechanism.

In particular, the construction of Ref.~\cite{joshi2026constantdepthunitarypreparationdicke} achieves arbitrary-weight Dicke-state preparation through a unitary procedure involving deterministic Grover search. By contrast, our protocol replaces the Grover-search-based filtering step by an adaptive ordering procedure: intermediate measurements and classical post-processing are used to transform an initially unordered collection of indices into a coherent superposition of ordered, distinct indices. This leads naturally to the uniform subset superposition introduced below. The two approaches thus provide complementary routes to constant-depth Dicke-state preparation, with different circuit structures and resource trade-offs, while both remain within the regime of polynomial ancillary resources.

To describe the protocol and prove Theorem~\ref{thm:Dicke}, we introduce a family of states that forms the central ingredient of our construction. These states are coherent superpositions over all $k$-element subsets of $\{0,\ldots,n-1\}$, represented by their increasingly ordered elements, and may be of independent interest beyond Dicke-state preparation.

\begin{definition}[The Uniform Subset Superposition]\label{def:uss}
For any integer $0\le k\le n$, the uniform subset superposition (USS) is defined as:
\begin{equation}
    \ket{\mathrm{USS}_{n,k}}=\frac{1}{\sqrt{\binom nk}}\sum_{0\leq j_1<\cdots<j_k\leq n-1}\ket{j_1}\cdots\ket{j_k}.
\end{equation}
Each index $j_\ell$ is stored in a
$\lceil\log_2 n\rceil$-qubit computational-basis register. For $k=0$, the USS is understood as the state of an empty index register.
\end{definition}

Preparing $\ket{\mathrm{USS}_{n,k}}$ is the main technical component of our Dicke-state preparation protocol. Once this ordered index state is available, it can be converted into a Dicke state using only additional constant-depth quantum operations.

Let $I$ denote the index register, let $T$ denote an $n$-qubit target register, and set $L=\lceil\log_2 n\rceil$. Then the structure of the protocol is
\begin{equation}\label{eq:procedure}
\begin{aligned}
    \ket{0}_I^{\otimes kL}\ket{0}_T^{\otimes n}
    &\xrightarrow{\hspace{1em}1\hspace{1em}}
    \ket{\mathrm{USS}_{n,k}}_I\ket{0}_T^{\otimes n}
    \\
    &\xrightarrow{\hspace{1em}2\hspace{1em}}
    \frac{1}{\sqrt{\binom{n}{k}}}
    \sum_{0\leq j_1<\cdots<j_k\leq n-1}
    \ket{j_1,\ldots,j_k}_I
    \ket{\bigoplus_{\ell=1}^{k}e_{j_\ell}}_T
    \\
    &\xrightarrow{\hspace{1em}3\hspace{1em}}
    \ket{0}_I^{\otimes kL}
    \frac{1}{\sqrt{\binom{n}{k}}}
    \sum_{0\leq j_1<\cdots<j_k\leq n-1}
    \ket{\bigoplus_{\ell=1}^{k}e_{j_\ell}}_T=
    \ket{0}_I^{\otimes kL}\ket{D_k^n}_T,
\end{aligned}\end{equation}
where $e_j\in\{0,1\}^n$ denotes the one-hot string whose only nonzero entry is at position $j$. Since the indices $j_1,\ldots,j_k$ are distinct, the bitwise sum
$\bigoplus_{\ell=1}^{k}e_{j_\ell}$ has Hamming weight exactly $k$.

In step~1 of Eq.~\eqref{eq:procedure}, we prepare the uniform subset superposition and use it as an index register encoding the locations of the excitations. In step~2, these indices are used to write $1$ into the corresponding positions of the target register. In step~3, the index register is coherently erased, leaving the desired Dicke state in the target register. The strict ordering $j_1<\cdots<j_k$ is crucial: it ensures that every weight-$k$ target string corresponds to a unique index tuple.

\medskip
\noindent\textbf{Preparing the Uniform Subset Superposition.}
To prove Theorem~\ref{thm:Dicke}, we first recall a result from Ref.~\cite{buhrman2024state} concerning the preparation of a uniform superposition over an arbitrary number of computational-basis states.

\begin{lemma}\label{lem:superposition}
    \textnormal{\cite{buhrman2024state}}
    For any positive integer $q$, there exists a deterministic constant-depth adaptive quantum circuit that prepares $\frac{1}{\sqrt{q}}\sum_{i=0}^{q-1}\ket{i}$ using $O(\lceil\log_2 q\rceil^2)$ qubits.
\end{lemma}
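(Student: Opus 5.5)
The plan is to reduce the non-power-of-two case to a \emph{weighted one-hot} selection over dyadic blocks. Write $L=\lceil\log_2 q\rceil$ and expand $q=\sum_{t=1}^{m}2^{a_t}$ in binary, with $a_1>a_2>\cdots>a_m$ and $m\le L$. The interval $\{0,\dots,q-1\}$ then splits into consecutive dyadic blocks $B_t=[s_t,s_t+2^{a_t})$ with $s_t=\sum_{r<t}2^{a_r}$. An integer $x$ lies in $B_t$ iff its bits above position $a_t$ agree with those of $q$ and its bit $a_t$ is $0$, where $q$ has a $1$. Hence
\begin{equation*}
\frac{1}{\sqrt q}\sum_{x<q}\ket{x}=\sum_{t=1}^{m}\sqrt{\tfrac{2^{a_t}}{q}}\,\ket{s_t}_{\text{high}}\otimes\ket{+}^{\otimes a_t}_{\text{low}} .
\end{equation*}
The circuit therefore has three stages. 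First, prepare a one-hot register $\sum_t\sqrt{p_t}\ket{e_t}$ with $p_t=2^{a_t}/q$. Second, write the block into the $L$-qubit output register controlled on $t$. Third, coherently erase the one-hot register, using the fact that $t$ is a function of $x$.

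\textbf{Stage 1} (the step I expect to be hardest, since the naive chain of controlled rotations has depth $m$). Choose angles so that $r_t=p_t/(1-\sum_{r<t}p_r)$, and prepare the product state $\bigotimes_t\bigl(\sqrt{1-r_t}\ket0+\sqrt{r_t}\ket1\bigr)$ on $m$ qubits. The position of the first $1$ is then distributed exactly as $p_t$, because $r_m=1$. To turn ``first $1$ at $t$'' into a clean one-hot state, compute in parallel, for each $t$, the prefix indicator $y_t=\mathrm{OR}(b_1,\dots,b_{t-1})$ using the $\mathrm{OR}$ gates of Table~\ref{tab:gates}. I would then measure every qubit $b_t$ with $y_t=1$ in the $X$ basis, which detaches it from the rest of the state up to a phase $(-1)^{\text{outcome}}$ on the branch where $y_t=1$. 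The correction is a controlled-$Z$ with control $y_t$, applied classically conditioned on the outcome. Next, uncompute the $y_t$ and reset the measured qubits; on the surviving branch, the bits after the first $1$ become $0$, giving $\sum_t\sqrt{p_t}\ket{e_t}$. The point to verify carefully is that this measurement-and-correct step is deterministic, so that no postselection is introduced. This holds because each measurement outcome only introduces a phase that is a known function of $y_t$.

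\textbf{Stages 2 and 3.} For the high bits, each one-hot qubit $e_t$ fans out $X$ gates onto the positions where $s_t$ has a $1$. Since the $e_t$ are mutually exclusive, this is a parity of fan-outs and runs in constant depth. For the low bits, bit $j$ must receive a Hadamard iff $a_t>j$ for the active $t$. I compute $h_j=\mathrm{OR}\{e_t: a_t>j\}$ for all $j$ in parallel, apply controlled-$H$ from $h_j$, and then uncompute $h_j$, which is legitimate because the $e_t$ are not touched in between. Finally, for each $t$ in parallel, I compute the Boolean predicate ``$x\in B_t$'', an $\mathrm{AND}$ of at most $L$ literals comparing $x$ with the fixed string $q$. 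I XOR this predicate into $e_t$, which exactly erases the one-hot register.

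Every stage uses constant depth and $O(L)$ parallel instances of $\mathrm{OR}/\mathrm{AND}$ on $O(L)$ bits, together with fan-outs, all of which are constant-depth adaptive operations by Section~\ref{sec:pre}. The qubit count is $O(L^2\log L)$ with the $\mathrm{OR}_n$ widths quoted in Table~\ref{tab:gates}. To reach the stated $O(L^2)$, I would share a single prefix-OR structure across all $t$ instead of building $m$ independent $\mathrm{OR}$ circuits, or else use the fact that $m$ and the gate fan-ins are at most $L$ with $O(L)$-qubit $\mathrm{OR}$ gadgets for inputs of size $L$. Getting this last $\log L$ factor down is a secondary technical point; the main obstacle remains the deterministic constant-depth preparation of the weighted one-hot state in Stage 1.
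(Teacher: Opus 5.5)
Your dyadic-block route differs from the paper's, and it can be made to work, but Stage~1 fails as written. First, ``measure every $b_t$ with $y_t=1$'' is not an operation you can perform. The register $y_t$ is coherent, so the choice of which qubits to measure cannot depend on it. Moreover, on branches with $y_t=0$ the qubit $b_t$ is exactly the bit recording whether $t$ is the first $1$. Second, and more fundamentally, the measure-and-correct trick only applies to a qubit that holds a deterministic Boolean function $f$ of the other registers. Only then does an $X$-outcome $o$ contribute a factor $(-1)^{of}/\sqrt2$ of the same modulus on every branch.

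To see the failure, group by the position $t^*$ of the first $1$. Your state is $\sum_{t^*}\sqrt{p_{t^*}}\,\ket{0^{t^*-1}1}\bigotimes_{t>t^*}\ket{\psi_t}\otimes\ket{y^{(t^*)}}$, with $\ket{\psi_t}=\sqrt{1-r_t}\ket{0}+\sqrt{r_t}\ket{1}$ and $y^{(t^*)}_t=\mathbb{I}_{t>t^*}$. On the branches $t^*<t$, the qubit $b_t$ is in the superposition $\ket{\psi_t}$. An $X$-measurement multiplies these branches by $(\sqrt{1-r_t}+(-1)^{o}\sqrt{r_t})/\sqrt2$. Its modulus depends on $o$ and differs from the $1/\sqrt2$ acquired on branches $t^*\ge t$, because $0<r_t<1$ for $t<m$ (already for $q=7$, where $r_2=2/3$). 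This reweights the $p_{t^*}$ in an outcome-dependent way that no phase correction can undo, so the step is a postselection, not a deterministic operation.

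The repair is to drop the measurement. After un-fanning the copies of $b$ used for the ORs, apply to each $b_t$ the rotation $R_t^{\dagger}$, with $R_t\ket{0}=\ket{\psi_t}$, controlled by $y_t$. This sends $\bigotimes_{t>t^*}\ket{\psi_t}$ to $\ket{0\cdots 0}$ and leaves $b_{\le t^*}$ untouched, giving $\sum_{t^*}\sqrt{p_{t^*}}\ket{e_{t^*}}\ket{y^{(t^*)}}$. Since $y^{(t^*)}$ is again the prefix-OR of $e_{t^*}$, recomputing the ORs clears $y$. With this change Stage~1 is unitary and constant depth, and Stages~2 and~3 go through as you describe.

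Even after that repair, your count is $O(L^2\log L)$, not the stated $O(L^2)$. Your second remedy assumes an exact constant-depth $O(L)$-width OR, which the gate widths in Table~\ref{tab:gates} do not supply. The shared-prefix idea does work, but you must carry it out. All the quantities you need are prefix ORs/ANDs over at most $L$ bits:
\begin{itemize}
\item the $y_t$;
\item the $h_j$, which are prefix ORs of $e$ because $a_t$ is decreasing;
\item the predicates $\mathbb{I}_{x\in B_t}$, which are top-down prefix ANDs of the bitwise agreement of $x$ with $q$, followed by one Toffoli with $\neg x_{a_t}$.
\end{itemize}
Computing them from dyadic-block ORs, as in the paper's index-erasure step, costs $O(L\log^2 L)$ qubits. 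The $O(L^2)$ fan-out copies of Stage~2 then become the dominant term.

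For comparison, the paper simply cites Buhrman et al. That construction applies Hadamards to $L$ qubits, then performs exact (Long-type) amplitude amplification onto $\{x<q\}$. Since $q>2^{L-1}$, the initial success probability exceeds $1/2$, so $O(1)$ rounds suffice. Each round uses the comparator $\mathbb{I}_{x<q}$ and a reflection about $\ket{0^L}$, and the $O(L^2)$ width comes directly from the comparator. Your route avoids amplitude amplification, at the price of a weighted one-hot preparation that, as shown above, is the delicate step.
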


The construction may be summarized as follows. Let $\ell=\lceil\log_2 q\rceil$. One first applies Hadamard gates to prepare the uniform superposition $\frac{1}{\sqrt{2^\ell}}\sum_{i=0}^{2^\ell-1}\ket{i}$, and then uses exact amplitude amplification, implemented through deterministic Grover search~\cite{long2001grover}, to restrict the support to the first $q$ basis states. Equipped with Lemma~\ref{lem:superposition} and the constant-depth primitives summarized in Table~\ref{tab:gates}, we now establish one of the main technical results of this work.

\begin{lemma}[Constant-depth preparation of the USS]
\label{lem:index}
    For any integer $0\leq k\leq n$, the state $\ket{\mathrm{USS}_{n,k}}$ can be prepared exactly by a constant-depth adaptive quantum circuit using $O\left(k^2\log^2 n\right)$ ancillary qubits and polynomial-time classical processing, with success probability at least $1/k$.
\end{lemma}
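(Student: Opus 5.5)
The plan is to obtain $\ket{\mathrm{USS}_{n,k}}$ from a product of independent uniform superpositions, using a \emph{descent encoding}. Sorting a tuple of indices coherently would need depth that grows with $k$. Instead, the circuit writes a tuple $(a_1,\ldots,a_k)$ and shifts each entry by a multiple of a block length $m$ that counts the descents seen so far. This makes the output register strictly increasing by construction. The single measurement layer $M$ then reveals only the number of descents $d$. Once $d$ is fixed, the branches that survive are to be in bijection with $k$-subsets of $\{0,\ldots,n-1\}$. The success event is that $d$ equals a prescribed value. Since $d$ takes at most $k$ values, pigeonhole gives a value $d^\ast$ of probability at least $1/k$; this is where the bound $1/k$ in the statement comes from. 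The explicit descent-number distribution is deferred to Appendix~\ref{app:prob}.

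\textbf{Preparation and encoding.} In $C_1$, classically compute the descent-number distribution for the chosen alphabet size $m$, pick the most likely value $d^\ast$, and choose $m$ (depending on $n,k$) so that the image of the encoding below is exactly the set of $k$-subsets of $\{0,\ldots,n-1\}$. In $Q_1$, apply Lemma~\ref{lem:superposition} $k$ times in parallel to prepare $\bigotimes_{i=1}^{k}\frac{1}{\sqrt m}\sum_{a_i=0}^{m-1}\ket{a_i}$. This costs $O(k\log^2 n)$ qubits.
\begin{itemize}
\item Compute the descent bits $\delta_i=\mathbb{I}_{a_i\ge a_{i+1}}$ for all $i$ in parallel with Greaterthan/Equality gates, at $O(\log^2 n)$ qubits each.
\item Compute every prefix count $D_i=\sum_{i'<i}\delta_{i'}$ simultaneously with Hammingweight gates on prefixes. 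This costs $O(k^2)$ qubits per prefix, or $O(k^3)$ in total if done naively. To stay within $O(k^2\log^2 n)$, I would instead compute $j_i=a_i+mD_i$ with constant-depth adders, using Exact$_t$/Equal$_i$ gates to select the shift: that is, add $m t$ controlled on $\mathbb{I}_{D_i=t}$.
\item Compute the total descent number $D$ with one Hammingweight gate into a separate register.
\end{itemize}
By construction $j_1<j_2<\cdots<j_k$.

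\textbf{Measurement and uncomputation.} The layer $M$ measures only the register holding $D$. On outcome $d^\ast$, the state is a uniform superposition over the tuples $a$ with $d^\ast$ descents, entangled with their encodings $j(a)$. The remaining task is to uncompute $a$, $\delta$ and $D_i$ from $j$ in constant depth, which requires the map $a\mapsto j$ to be injective and to have an explicit inverse on the surviving branches. The inverse is $a_i=j_i \bmod m$ and $D_i=\lfloor j_i/m\rfloor$. When $m$ is a power of two both are bit-slices, so this erasure is only CNOT layers and reruns of the adders in reverse. The garbage is then cleanly disentangled, and the $j$-register holds a uniform superposition over the image set.

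\textbf{Main obstacle.} The step I expect to be hardest is showing that, for the chosen $m$ and $d^\ast$, the image $\{j(a): D(a)=d^\ast\}$ is \emph{exactly} the set of $k$-subsets of $\{0,\ldots,n-1\}$. Without this the output is uniform over the wrong family and the protocol is not exact. The obvious map is injective, but its image is a constrained family of subsets of $\{0,\ldots,m(d^\ast+1)-1\}$, not all of them. My first attempt would be to change coordinates in stars-and-bars fashion: encode the subset by the nondecreasing tuple $b_i=j_i-(i-1)\in\{0,\ldots,n-k\}$, and choose the descent convention (weak versus strict) and the block length so that prefix descent counts are precisely the ``carry'' of a positional decomposition of $b$. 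Any mismatch then becomes a classically checkable predicate, computed with the same gates and folded into the heralded measurement. The remaining cost is that this extra postselection must not push the success probability below $1/k$. Ancilla accounting then collects $O(k\log^2 n)$ for the uniform states, $O(k\log^2 n)$ for the comparators, and $O(k^2\log^2 n)$ for the $k$ selected additions over $k$ possible shifts, giving the claimed $O(k^2\log^2 n)$. Polynomial classical time follows because the descent distribution and $d^\ast$ can be computed by dynamic programming over $(i,a_i,d)$.
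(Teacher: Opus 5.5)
Your skeleton (parallel uniform registers, comparators, one heralding layer, pigeonhole over at most $k$ descent values) matches the paper. The step you flag as the main obstacle, however, is a genuine gap, and with the measurement you propose it cannot be closed.

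Your map $a\mapsto j$, $j_i=a_i+mD_i$, is a bijection from $\{0,\ldots,m-1\}^k$ onto $S=\{j:\ j_1\le m-1,\ 1\le j_{i+1}-j_i\le m\}$. The outcome $D=d$ selects the slice $\lfloor j_k/m\rfloor=d$. A slice is the family of all $k$-subsets of an interval only for the smallest attainable value $d=\lfloor (k-1)/m\rfloor$, not for a generic pigeonhole value $d^\ast$. For example, for $k=2$, $d=1$, $m\ge 2$ the slice is $\{(j_1,j_2):\ j_1<m\le j_2\le j_1+m\}$, which misses $(0,1)$ and $(0,m+1)$.

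Folding a mismatch predicate into the herald cannot help, because postselection only deletes branches, while the defect here is missing subsets. In fact, your output is the $j$-register itself and distinct outcomes have disjoint images, so \emph{any} heralding of this type succeeds with probability at most $\binom{n}{k}/m^k$. Moreover, $S\supseteq\binom{\{0,\ldots,n-1\}}{k}$ already forces $m\ge n-k+1$ (consider $\{n-k,\ldots,n-1\}$). Hence the success probability is at most $\binom{n}{k}/(n-k+1)^k$, which tends to $1/k!<1/k$ for fixed $k\ge 3$ as $n$ grows.

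The structural reason is that for $k\ge 3$ the single number $d$ does not determine the relative order of $a_1,\ldots,a_k$. The surviving branches therefore still mix different orders, and turning them into sorted (subset) form would need exactly the coherent sort you set out to avoid. Passing to stars-and-bars coordinates $b_i=j_i-(i-1)$ does not change this, since a uniform superposition over nondecreasing tuples is again a sorted object.

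The missing idea, which the paper's proof uses, is to let the measurement reveal the \emph{entire} relative order rather than one count. The paper computes all $\binom{k}{2}$ bits $\mathbb{I}_{j_s>j_t}$ with Greaterthan gates, breaking ties by register index. It then uncomputes the copies and comparator workspace and measures these bits.

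This collapses the state onto a single sorting permutation $r$, while leaving a uniform superposition over all value tuples consistent with $r$. Sorting thus becomes a classical relabeling of registers at zero depth, giving $j_{r_1}\le\cdots\le j_{r_k}$ with strict inequality forced exactly at the descents of $r$. Adding the classically computed offsets $o_i=|\{s<i:\ r_s<r_{s+1}\}|$ is then a bijection onto all strictly increasing tuples in $\{0,\ldots,\eta+k-2-d\}$, where $d=\mathrm{des}(r)$. So \emph{every} outcome yields an exact $\ket{\mathrm{USS}_{\eta+k-1-d,k}}$.

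Outcome $d$ occurs with probability $\binom{\eta+k-1-d}{k}A(k,d)/\eta^k$, where $A(k,d)$ is the Eulerian number. One chooses the alphabet size $\eta\in[n-k+1,n]$ so that the most likely $d$ satisfies $\eta+k-1-d=n$, and this is where pigeonhole gives $1/k$.

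The quantum cost is only the comparators and fan-out copies, $O(k^2\log^2 n)$, plus $k$ additions of classical constants; no coherent prefix counts are needed. By contrast, your shift selection, with an $\mathrm{Exact}_t$ gate for every pair $(i,t)$, would by Table~\ref{tab:gates} cost $O(k^3\log k)$ qubits, which also exceeds the budget for large $k$.
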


\begin{proof}
The cases $k=0$ and $k=1$ are immediate, so we assume throughout the proof that $2\leq k\leq n$. All states below are written without their normalization factors.

Let $\eta$ be an integer satisfying $n-k+1\leq\eta\leq n$, whose value will be chosen later. Using Lemma~\ref{lem:superposition}, we prepare $k$ independent uniform index registers, $\sum_{j_1,\ldots,j_k=0}^{\eta-1}\ket{j_1}\cdots\ket{j_k}$. Each index is represented by $O(\log n)$ qubits. Preparing the $k$ registers in parallel therefore requires constant adaptive depth and $O\left(k\log^2 n\right)$ ancillae according to Lemma~\ref{lem:superposition}.

Next, we do the following ordering:
\begin{equation}\begin{aligned}
    \sum_{j_1,\cdots,j_k=0}^{\eta-1}\ket{j_1}\cdots\ket{j_k}
    &\xrightarrow{\hspace{1em}1\hspace{1em}}\sum_{j_1,\cdots,j_k=0}^{\eta-1}\begin{aligned}[t]\label{eq:get_order}
        \ket{j_1}&\ket{\mathbb{I}_{j_1>j_2}}\ket{\mathbb{I}_{j_1>j_3}}\cdots\ket{\mathbb{I}_{j_1>j_{k-1}}}\ket{\mathbb{I}_{j_1>j_k}}\\
        \ket{j_2}&\ket{\mathbb{I}_{j_2\geq j_1}}\ket{\mathbb{I}_{j_2>j_3}}\cdots\ket{\mathbb{I}_{j_2>j_{k-1}}}\ket{\mathbb{I}_{j_2>j_k}}\\
        \ket{j_3}&\ket{\mathbb{I}_{j_3\geq j_1}}\ket{\mathbb{I}_{j_3\geq j_2}}\cdots\ket{\mathbb{I}_{j_3>j_{k-1}}}\ket{\mathbb{I}_{j_3>j_k}}\\
        &\cdots\\
        \ket{j_{k-1}}&\ket{\mathbb{I}_{j_{k-1}\geq j_1}}\ket{\mathbb{I}_{j_{k-1}\geq j_2}}\cdots\ket{\mathbb{I}_{j_{k-1}\geq j_{k-2}}}\ket{\mathbb{I}_{j_{k-1}>j_k}}\\
        \ket{j_k}&\ket{\mathbb{I}_{j_k\geq j_1}}\ket{\mathbb{I}_{j_k\geq j_2}}\cdots\ket{\mathbb{I}_{j_k\geq j_{k-2}}}\ket{\mathbb{I}_{j_k\geq j_{k-1}}}
    \end{aligned}\\
    &\xrightarrow{\hspace{1em}2\hspace{1em}}\sum_{j_1,\cdots,j_k}\ket{j_1}\ket{\text{order}_1}\ket{j_2}\ket{\text{order}_2}\cdots\ket{j_k}\ket{\text{order}_k}.
\end{aligned}\end{equation}
Here $\mathbb{I}_E$ is an indicator which takes value $1$ if the expression $E$ is true and $0$ otherwise. The register $\ket{\text{order}_i}$ records the position of $j_i$ when $j_1,\ldots,j_k$ are sorted from small to large. The required operations and the number of ancillary qubits for each step are as follows:

\medskip
\noindent\textbf{Step 1.} We utilize the Greaterthan gate as mentioned in Table~\ref{tab:gates}, each of which requires $O(m^2)$ ancillary qubits for an operation on $m$-qubit integers. To get those like $\ket{\mathbb{I}_{j_2\geq j_1}}$, one can simply copy (equivalent to apply a CNOT gate) the qubit $\ket{\mathbb{I}_{j_1< j_2}}$ and apply an $X$ gate on it. To parallelize this collection of operations, we can first use a quantum fan-out gate to copy each $j_i$ many times. Specifically, we need $(k-1)$ copies of $j_i$ for each $i\in[1,k]$, so the total amount of qubits overhead is $k\cdot(k-1)\cdot\lceil\log_2\eta\rceil=O(k^2\log n)$. The total number of Greaterthan gates used is $\frac{1}{2}k(k-1)$, thus the number of ancillae used to perform these gates is $O(k^2\log^2 n)$. The copied index registers and the work registers of the comparison circuits are then uncomputed, leaving only the comparison-output registers. The total number of qubits required in step 1 is $O(k^2\log^2 n)$.

\medskip
\noindent\textbf{Step 2.} This step is the intermediate measurement layer $M$ mentioned in the circuit~\eqref{eq:circuit}, where we measure all the ancillae to impose a specific order on each $j_i$. To be specific, we mean adding all the measurement outcomes of the ancillae of $j_i$ together. Take $j_2$ for instance, $\text{order}_2=\mathbb{I}_{j_2\geq j_1}+\mathbb{I}_{j_2>j_3}+\cdots+\mathbb{I}_{j_2>j_k}$. Note that the adding operation is done by classical computation after the measurement layer instead of the Hammingweight gates before the measurement layer, and we write $\ket{\text{order}_i}$ here only for the convenience of symbols. It is clear that after this layer of measurement, every $j_i$ will get a unique order, i.e., $\left\{\text{order}_i\right\}_{i=1}^{k}$ is a permutation of the sequence $0,1,\ldots,k-1$ since each $j_i$ is strictly greater or smaller than any of the others in the sense of considering its value and index together (if $j_s$ and $j_t$ are taking the same value at a certain term, then the one with a larger index will get a larger order).

\medskip
Now permute the states to align the ordering ancillae in a progressive order (incurring no quantum-depth overhead), and we get:
\begin{equation}\label{eq:index_reorder}
    \sum_{j_{r_1}\leq\cdots\leq j_{r_k}}\ket{j_{r_1}}\cdots\ket{j_{r_k}},
\end{equation}
where $j_{r_i}$ has order $i$. Here, however, not every weak inequality in Eq.~\eqref{eq:index_reorder} can necessarily be saturated. Consider, for example, the case $k=3$. If the resulting order is $j_1\leq j_2\leq j_3$, corresponding to $\mathrm{order}_1=0$, $\mathrm{order}_2=1$, and $\mathrm{order}_3=2$, then all the displayed equalities are permitted. By contrast, if the order is $j_2\leq j_3\leq j_1$, corresponding to $\mathrm{order}_1=2$, $\mathrm{order}_2=0$, and $\mathrm{order}_3=1$, then the last inequality must in fact be strict, $j_2\leq j_3<j_1$. Indeed, the measurement outcome in this branch includes $\mathbb{I}_{j_1>j_3}=1$, which excludes the possibility $j_1=j_3$.

More generally, an equality between $j_{r_i}$ and $j_{r_{i+1}}$ is forbidden precisely when $r_i>r_{i+1}$. Therefore, the number of inequalities that are already strict is exactly the descent number of the permutation $(r_1,\ldots,r_k)$, which we denote by $\mathrm{des}$. We convert all remaining weak inequalities into strict ones through a lifting operation. For example, for the ordering $j_2\leq j_3<j_1$, we increment both $j_3$ and $j_1$ by one, obtaining $j_2<j_3+1<j_1+1$. In general, for a permutation with $\mathrm{des}=d$, there are $(k-1)-d$ weak inequalities that must be made strict. More explicitly, we add to the $i$-th ordered register the classically determined offset
\begin{equation}
    a_i=\left|\left\{s<i:r_s<r_{s+1}\right\}\right|,\qquad 1\leq i\leq k,
\end{equation}
so that
\begin{equation}
    j_{r_1}+a_1<j_{r_2}+a_2<\cdots<j_{r_k}+a_k.
\end{equation}
Since $0\leq a_i\leq k-1-d$, the largest possible index after lifting is $\eta+k-2-d$. The offsets can be computed classically in polynomial-time as part of $C_2$ in the circuit~\eqref{eq:circuit}. In the quantum circuit, the additions $\ket{j_{r_i}}\longmapsto\ket{j_{r_i}+a_i}$ can be performed in parallel using the $\mathrm{Add}$ gate, with $O(k\log^2 n)$ ancillary qubits. Consequently, conditioned on obtaining a permutation with descent number $d$, the lifting operation produces exactly $\ket{\mathrm{USS}_{\eta+k-1-d,k}}$.

\medskip
For fixed $k$ and $\eta$, the measurement induces a probability distribution over $d=0,1,\ldots,k-1$. Let $p(d)$ denote the probability of obtaining descent number $d$, and let $d(\eta)$ be a value of $d$ that maximizes $p(d)$. The successful copies can be identified from the measurement outcomes using polynomial-time classical processing. Since $d(\eta)$ maximizes a probability distribution supported on at most $k$ values, we have $p\left(d(\eta)\right)\geq1/k$. A more detailed analysis in Appendix~\ref{app:prob}, supported by numerical results, suggests the stronger scaling $1/p(d(\eta))=O(\sqrt{k})$.

It remains to determine the value of $\eta$, which is performed before the quantum circuit is executed. To obtain the target state $\ket{\mathrm{USS}_{n,k}}$, we choose $\eta$ such that $\eta+k-1-d(\eta)=n$. As discussed in Appendix~\ref{app:prob}, a suitable value of $\eta$ can be found by scanning the interval $n-k+1\leq\eta\leq n$ and evaluating the corresponding descent-number probabilities, which requires only polynomial-time classical computation. Combining the preceding steps, the total ancillary-qubit requirement is $O\left(k^2\log^2 n\right)$, while the adaptive quantum depth remains constant. Restoring the normalization and relabeling the ordered index registers, the output is
\begin{equation}
    \frac{1}{\sqrt{\binom{n}{k}}}\sum_{0\leq j_1<\cdots<j_k\leq n-1}\ket{j_1}\cdots\ket{j_k}=\ket{\mathrm{USS}_{n,k}}.
\end{equation}

\end{proof}

To amplify the success probability, we define $\alpha=1/p(d(\eta))$ and perform $R=\lceil c\alpha\rceil$ independent copies of the preparation circuit in parallel, where $c$ is a constant. The probability that none of them yields the desired descent number is bounded by
\begin{equation}
    P_{\mathrm{fail}}=\left(1-\frac{1}{\alpha}\right)^R\leq\exp\left(-\frac{R}{\alpha}\right)\leq e^{-c}.
\end{equation}

The parameter $c$ controls the failure probability: increasing the number of parallel repetitions linearly in $c$ suppresses the failure probability exponentially as $e^{-c}$. This fact is characterized by the following corollary.

\begin{corollary}[Constant-depth preparation of the USS with small failure rate]\label{cor:uss}
    For any integer $0\leq k\leq n$, and for any $c>0$, the state $\ket{\mathrm{USS}_{n,k}}$ can be prepared exactly by a constant-depth adaptive quantum circuit using $O\left(c\alpha k^2\log^2 n\right)$ ancillary qubits and polynomial-time classical processing. The success probability is at least $1-e^{-c}$, where the parameter $\alpha=\alpha(n,k)$ satisfies $\alpha\leq k$.
\end{corollary}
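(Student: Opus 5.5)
The plan is to run $R=\lceil c\alpha\rceil$ independent copies of the circuit from Lemma~\ref{lem:index} in parallel on disjoint registers, all with the same classically precomputed $\eta$ satisfying $\eta+k-1-d(\eta)=n$. After the measurement layer $M$, we postselect classically on a successful copy. The cases $k\in\{0,1\}$ are trivial, since then the USS is prepared deterministically, so we may take $\alpha=1$. For $k\ge 2$, $\alpha=1/p(d(\eta))$, and Lemma~\ref{lem:index} already gives $p(d(\eta))\ge 1/k$, hence $\alpha\le k$.

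The key steps, in order, are as follows.
\begin{enumerate}
\item All $R$ copies are prepared simultaneously through Step~1 and the measurement layer $M$. Since the copies act on disjoint qubits, the quantum depth is unchanged.
\item In the classical stage $C_2$, for each copy we compute the permutation $(r_1,\ldots,r_k)$ from the measured order bits and its descent number. We then select the first copy $s$ whose descent number equals $d(\eta)$; this takes polynomial time.
\item We apply the relabeling and the lifting additions $\ket{j_{r_i}}\mapsto\ket{j_{r_i}+a_i}$ only to copy $s$. Because the copies are in a product state, conditioning on copy $s$'s outcome leaves its index register exactly in $\ket{\mathrm{USS}_{n,k}}$, by the argument in the proof of Lemma~\ref{lem:index}. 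The other copies are simply discarded or reset, so they do not disturb copy $s$.
\item Ancilla count: each copy uses $O(k^2\log^2 n)$ qubits, so $R$ copies use $O(\lceil c\alpha\rceil k^2\log^2 n)=O(c\alpha k^2\log^2 n)$. Here the case $c\alpha<1$ is absorbed by treating $c$ as at least a constant, or by noting that $\lceil c\alpha\rceil\le c\alpha+1\le 2\max(c\alpha,1)$.
\end{enumerate}

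For the failure probability, the descent-number outcomes of the copies are independent, and each copy succeeds with probability $p(d(\eta))=1/\alpha$. All copies fail with probability $(1-1/\alpha)^R\le e^{-R/\alpha}\le e^{-c}$, using $1-x\le e^{-x}$ and $R\ge c\alpha$.

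The main point needing care is step 3: we must verify that selecting a copy based on classical outcomes does not bias the output state. This holds because, conditioned on a copy's own descent number being $d(\eta)$, its post-measurement state is the exact USS irrespective of the other copies' outcomes, since the joint state factorizes across copies.
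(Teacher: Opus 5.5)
Your proposal is correct and follows the paper's argument. You run $R=\lceil c\alpha\rceil$ independent parallel copies of the Lemma~\ref{lem:index} circuit with the same $\eta$, keep a copy whose descent number equals $d(\eta)$, bound the failure probability by $(1-1/\alpha)^R\le e^{-R/\alpha}\le e^{-c}$, and inherit $\alpha\le k$ from $p(d(\eta))\ge 1/k$. Your extra remarks are sound refinements that the paper leaves implicit: the product-state argument that selecting a successful copy does not bias its state, and the treatment of the ceiling when $c\alpha<1$.
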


We emphasize that although the bound $\alpha\leq k$ is the one used in the rigorous statement of Corollary~\ref{cor:uss}, the analytical results in limiting regimes and the numerical evidence presented in Appendix~\ref{app:prob} suggest the stronger scaling $\alpha=O(\sqrt{k})$, under which the first term in the ancillary-qubit complexity becomes $O(c k^{5/2}\log^2 n)$.

\medskip
Lemma~\ref{lem:index} provides the key ingredient for our Dicke-state preparation protocol: the resulting USS state serves as an index register specifying the locations of the excitations in the target register. More precisely, for each basis state $\ket{j_1}\cdots\ket{j_k}$ in $\ket{\mathrm{USS}_{n,k}}$, we write $1$ into the $j_1$-th, $\ldots$, $j_k$-th positions of an initially zero target register. This produces the corresponding Hamming-weight-$k$ computational-basis state. Since the indices are arranged in strictly increasing order, we can get the superposition of all those Hamming-weight-$k$ computational-basis states. Moreover, the qubits in the USS state can subsequently be erased. We now describe these two steps.

\medskip
\noindent\textbf{Filling the Target Register.}
The following transformation can be implemented by a constant-depth adaptive quantum circuit using $O(kn\log n\log\log n)$ ancillary qubits:
\begin{equation}\label{eq:filling}
    \ket{\mathrm{USS}_{n,k}}\ket{0}^{\otimes n}\longrightarrow\frac{1}{\sqrt{\binom{n}{k}}}\sum_{0\leq j_1<\cdots<j_k\leq n-1}\ket{j_1}\cdots\ket{j_k}\ket{\bigoplus_{\ell=1}^{k}e_{j_\ell}},
\end{equation}
where $e_j\in\{0,1\}^n$ denotes the one-hot string whose only nonzero entry is at position $j$.

We use the standard construction of Ref.~\cite{buhrman2024state}, which can be decomposed into the following steps:
\begin{equation}\begin{aligned}
    \frac{1}{\sqrt{\binom{n}{k}}}\sum_{j_1<\cdots<j_k}^{n-1}\ket{j_1}\cdots\ket{j_k}\ket{0}^{\otimes n}&\xrightarrow{\hspace{1em}1\hspace{1em}}
    \frac{1}{\sqrt{\binom{n}{k}}}\sum_{j_1<\cdots<j_k}^{n-1}\ket{j_1}\cdots\ket{j_k}\frac{1}{\sqrt{2^n}}\sum_{\ell=0}^{2^n-1}\ket{\ell}\\
    &\xrightarrow{\hspace{1em}2\hspace{1em}}
    \frac{1}{\sqrt{\binom{n}{k}}}\sum_{j_1<\cdots<j_k}^{n-1}\ket{j_1}^{\otimes n}\cdots\ket{j_k}^{\otimes n}\frac{1}{\sqrt{2^n}}\sum_{\ell=0}^{2^n-1}\ket{\ell}^{\otimes k}\\
    &\xrightarrow{\hspace{1em}3\hspace{1em}}
    \frac{1}{\sqrt{\binom{n}{k}}}\sum_{j_1<\cdots<j_k}^{n-1}\ket{j_1}^{\otimes n}\cdots\ket{j_k}^{\otimes n}\frac{1}{\sqrt{2^n}}\sum_{\ell=0}^{2^n-1}(-1)^{\ell_{j_1}+\cdots+\ell_{j_k}}\ket{\ell}^{\otimes k}\\
    &\xrightarrow{\hspace{1em}4\hspace{1em}}
    \frac{1}{\sqrt{\binom{n}{k}}}\sum_{j_1<\cdots<j_k}^{n-1}\ket{j_1}\cdots\ket{j_k}\frac{1}{\sqrt{2^n}}\sum_{\ell=0}^{2^n-1}(-1)^{\ell_{j_1}+\cdots+\ell_{j_k}}\ket{\ell}\\
    &\xrightarrow{\hspace{1em}5\hspace{1em}}
    \frac{1}{\sqrt{\binom{n}{k}}}\sum_{j_1<\cdots<j_k}^{n-1}\ket{j_1}\cdots\ket{j_k}\ket{\bigoplus_{\ell=1}^{k}e_{j_\ell}}.
\end{aligned}\end{equation}

In step~1, transversal Hadamard gates are applied to the target register to prepare the uniform superposition over all $n$-bit strings. In step~2, quantum fan-out gates create the copies of the index and target registers required to parallelize the subsequent operations. This step uses $O(kn\log n)$ ancillary qubits. Step~3 applies, in parallel, phase versions of the $\mathrm{Equal}_i$ gates listed in Table~\ref{tab:gates}. Specifically, instead of flipping the target qubit when $j=i$, the phase version applies a Pauli-$Z$ gate. Such a gate can be obtained from $\mathrm{Equal}_i$ by conjugating its target qubit with Hadamard gates. For every $i\in\{0,\ldots,n-1\}$, we apply the phase version of $\mathrm{Equal}_i$ to the corresponding copy of $\ket{j_1}$ and the qubit $\ket{\ell_i}$. Exactly one of these gates is activated, producing the phase $(-1)^{\ell_{j_1}}$. Repeating this construction in parallel for all $j_1,\ldots,j_k$ produces
$(-1)^{\ell_{j_1}+\cdots+\ell_{j_k}}$. Since each index contains $O(\log n)$ qubits, an $\mathrm{Equal}_i$ gate acting on such an index uses
$O(\log n\log\log n)$ qubits. The $kn$ parallel equality tests in step~3 therefore require $O(kn\log n\log\log n)$ qubits. In step~4, the copies introduced in step~2 are uncomputed by applying the same fan-out operations in reverse. Finally, step~5 applies transversal Hadamard gates to the target register. Every step has constant adaptive depth, and the total ancillary-qubit requirement is $O(kn\log n\log\log n)$.\\

At this point, the target register is in the desired Dicke-state superposition but remains entangled with the index register. It therefore remains to erase the indices coherently.

\medskip
\noindent\textbf{Erasing the Index Register.}
The following transformation can be implemented by a constant-depth adaptive quantum circuit using $O(n^2+kn\log k\log\log k)$ ancillary qubits:
\begin{equation}
\label{eq:clean_index}
    \frac{1}{\sqrt{\binom{n}{k}}}
    \sum_{0\leq j_1<\cdots<j_k\leq n-1}
    \ket{j_1}\cdots\ket{j_k}
    \ket{\bigoplus_{\ell=1}^{k}e_{j_\ell}}
    \longrightarrow
    \frac{1}{\sqrt{\binom{n}{k}}}
    \sum_{0\leq j_1<\cdots<j_k\leq n-1}
    \ket{0}\cdots\ket{0}
    \ket{\bigoplus_{\ell=1}^{k}e_{j_\ell}}.
\end{equation}

An explicit constant-depth implementation of this operation is given in Lemma~4.9 of Ref.~\cite{buhrman2024state}. However, in their most resource-intensive part, $n$ Hammingweight gates on prefixes of the target register of lengths 
$1,2,\ldots,n$ need to be performed in parallel. According to Table~\ref{tab:gates}, an exact Hammingweight gate acting on $m$ input qubits can be implemented using $O(m^2)$ ancillary qubits, and the total ancillary-qubit requirement is therefore $\sum_{m=1}^{n}O(m^2)=O(n^3)$. In the following, we show that the resource estimation can be improved using the trick of shared dyadic block counts.\\

For each computational-basis component of the target register, we write $x=\bigoplus_{\ell=1}^{k}e_{j_\ell}=(x_0,\ldots,x_{n-1})$ for short, so that $x_t=1$ when $t\in\{j_1,\ldots,j_k\}$ and $0$ otherwise. Since the indices are strictly increasing, $j_\ell$ is the position of the $\ell$-th nonzero bit of $x$. It can therefore be recovered from the prefix ranks
\begin{equation}\label{eq:prefix-rank}
    R_t=\sum_{u=0}^{t-1}x_u,\qquad0\leq t\leq n,
\end{equation}
where we set $R_0=0$. In particular,
\begin{equation}\label{eq:index-from-prefix-rank}
    t=j_\ell\quad\Longleftrightarrow\quad x_t=1\ \text{and}\ R_t=\ell-1.
\end{equation}

Let $L=\lceil\log_2 n\rceil$, $N=2^L$ and append $(N-n)$ qubits initialized in $\ket{0}$ to the target register. For every $0\leq s\leq L$ and $0\leq q<N/2^s$, define the dyadic interval $I_{s,q}=\left\{q2^s,q2^s+1,\ldots,(q+1)2^s-1\right\}$
and its Hamming weight $w_{s,q}=\sum_{t\in I_{s,q}}x_t$.

We first create required copies of the target qubits using quantum
fan-out, after which compute all the dyadic weights in parallel. At scale $s$, there are $N/2^s$ intervals, each containing $2^s$ bits. According to Table~\ref{tab:gates}, the exact Hammingweight gate on a $2^s$-qubit register requires $O(2^{2s})$ qubits. The total number of qubits required to compute all dyadic weights is therefore
\begin{equation}\label{eq:dyadic-counting-cost}
    \sum_{s=0}^{L}\frac{N}{2^s}\cdot O\left(2^{2s}\right)=O\left(N\sum_{s=0}^{L}2^s\right)=O(N^2)=O(n^2).
\end{equation}

For every $t$, the prefix interval $\{0,\ldots,t-1\}$ has a canonical decomposition into at most $L=O(\log n)$ disjoint dyadic intervals. Let $\mathcal{D}_t$ denote this decomposition. Eq.~\eqref{eq:prefix-rank} can then be written as $R_t=\sum_{I_{s,q}\in\mathcal{D}_t}w_{s,q}$.

Thus, each prefix rank is obtained by adding at most $O(\log n)$ integers, each represented using $O(\log k)$ bits. Exact iterated addition of this form can be implemented by a polynomial-size constant-depth threshold circuit~\cite{HESSE2002695threshold}. By the exact collapse $\mathrm{QNC}_f^0=\mathrm{QTC}_f^0$ established in Ref.~\cite{takahashi2016collapse}, it can be implemented exactly by a constant-depth quantum circuit with unbounded fan-out using $\mathrm{polylog}(n)$ qubits for each value of $t$. All $n$ prefix ranks can therefore be computed in parallel using $n\,\mathrm{polylog}(n)$ additional qubits. The dyadic-weight registers can be copied to the corresponding addition circuits using $O(n\log n\log k)$ qubits, since each prefix uses at most $O(\log n)$ such registers.

We next convert the prefix ranks into one-hot encodings of the ordered indices. For every $1\leq \ell\leq k$ and $0\leq t\leq n-1$, define $g_{\ell,t}=x_t\,\mathbb{I}_{R_t=\ell-1}$. These bits can be computed in parallel using the $\mathrm{Equal}_{\ell-1}$ gates in Table~\ref{tab:gates}, followed by Toffoli gates controlled by $x_t$. Equation~\eqref{eq:index-from-prefix-rank} implies that $(g_{\ell,0},\ldots,g_{\ell,n-1})=e_{j_\ell}$.

There are $nk$ equality tests, each acting on an $O(\log k)$-qubit prefix-rank register. This step therefore requires $O\left(kn\log k\log\log k\right)$ qubits.

It remains to use these one-hot registers to erase the original binary indices. Taking $j_1$ as an example, we perform the following operations:
\begin{equation}\begin{aligned}
    \ket{j_1}\ket{e_{j_1}}&\xrightarrow{\hspace{1em}1\hspace{1em}}\sum_{i=0}^{N-1}(-1)^{i\cdot j_1}\ket{i}\ket{e_{j_1}}\\
    &\xrightarrow{\hspace{1em}2\hspace{1em}}\sum_{i=0}^{N-1}(-1)^{i\cdot j_1}\ket{i}^{\otimes n}\ket{e_{j_1}}\\
    &\xrightarrow{\hspace{1em}3\hspace{1em}}\sum_{i=0}^{N-1}\ket{i}\ket{e_{j_1}}\\
    &\xrightarrow{\hspace{1em}4\hspace{1em}}\ket{0}\ket{e_{j_1}},
\end{aligned}\end{equation}
where we ignore some coefficients and $i\cdot j_1$ denotes the inner product of $i$ and $j_1$ as bit strings. Step~1 applies transversal Hadamard gates on $\ket{j_1}$. Step~2 copies $\ket{i}$ for $n$ times using fan-out. In step~3, for each $0\le\ell\le n-1$, we apply a bunch of Pauli-$Z$ gates on the $\ell$-th copy of $\ket{i}$ whose positions together form the binary representation of $\ell$ controlled by the $\ell$-th bit of $\ket{e_{j_1}}$ to eliminate the phase factor $(-1)^{i\cdot j_1}$. This can be done by conjugating fan-out with Hadamard gates. Step~4 discards $\sum_{i=0}^{N-1}\ket{i}$ since it is now disentangled with $\ket{e_{j_1}}$. The whole process uses $O(n\log n)$ ancillary qubits, and since there are $k$ indices to eliminate, the total number of qubits required is $O(kn\log n)$.

Finally, we uncompute the registers $g_{\ell,t}$, the prefix ranks $R_t$, and the dyadic weights $w_{s,q}$ in reverse order. All these registers were computed solely from the target register, which remains unchanged, so the uncomputation restores every ancillary qubit to $\ket{0}$. Combining all the stuff above, the index-erasing operation can be accomplished in constant depth, and the total number of ancillary qubits used by the index-erasure operation is $O(n^2+kn\log k\log\log k)$. Note that $n\,\mathrm{polylog}(n)$ and $O(kn\log n)$ can always be bounded by $O(n^2+kn\log k\log\log k)$.

\medskip
We are now ready to prove Theorem~\ref{thm:Dicke}.

\begin{proof}
    The complete Dicke-state preparation protocol is
    \begin{equation}\begin{aligned}
        \ket{0}&\xrightarrow{\hspace{1em}1\hspace{1em}}
        \ket{\mathrm{USS}_{n,k}}\ket{0}^{\otimes n}\\
        &\xrightarrow{\hspace{1em}2\hspace{1em}}
        \frac{1}{\sqrt{\binom{n}{k}}}\sum_{0\le j_1<\cdots<j_k\le n-1}\ket{j_1}\cdots\ket{j_k}\ket{\bigoplus_{l=1}^{k}e_{j_l}}\\
        &\xrightarrow{\hspace{1em}3\hspace{1em}}
        \frac{1}{\sqrt{\binom{n}{k}}}\sum_{0\le j_1<\cdots<j_k\le n-1}\ket{0}\cdots\ket{0}\ket{\bigoplus_{l=1}^{k}e_{j_l}}=\ket{0}\ket{D_k^n}.
    \end{aligned}\end{equation}
    
    Step~1 uses Lemma~\ref{lem:index} to prepare the uniform subset superposition introduced in Definition~\ref{def:uss}. Step~2 writes the excitations specified by the index registers into the target register using Eq.~\eqref{eq:filling}, and step~3 erases the index registers using Eq.~\eqref{eq:clean_index}.

    The first step uses $O\left(k^2\log^2 n\right)$ ancillary qubits, while the latter two steps use $O(kn\log n\log\log n)$ and $O(n^2+kn\log k\log\log k)$ ancillary qubits, respectively. Hence the total ancillary-qubit requirement is
    \begin{equation}
        O\left(n^2+k^2\log^2 n+kn\log n\log\log n\right)
    \end{equation}
    
    All quantum steps have constant adaptive depth. The polynomial-time classical computation and the success probability at least $1/k$ follow from Lemma~\ref{lem:index}. Conditioned on success, every transformation is exact, and the target register is therefore prepared exactly in the state $\ket{D_k^n}$.
\end{proof}

If we replace Lemma~\ref{lem:index} with Corollary~\ref{cor:uss} to suppress the failure rate, we can have the following corollary.
\begin{corollary}[Constant-Depth Dicke-State Preparation with small failure rate]\label{cor:dicke}
    For any $n$ and $0\leq k\leq n$, and for any $c>0$, the Dicke-$(n,k)$ state can be exactly prepared using $O(1)$ depth quantum computation with $O\left(n^2+c\alpha k^2\log^2 n+kn\log n\log\log n\right)$ ancillary qubits and polynomial-time classical computation. The success probability is at least $1-e^{-c}$, where the parameter $\alpha=\alpha(n,k)$ satisfies $\alpha\leq k$.
\end{corollary}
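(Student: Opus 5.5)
The plan is to repeat the proof of Theorem~\ref{thm:Dicke} step by step, with Lemma~\ref{lem:index} replaced by Corollary~\ref{cor:uss} in step~1. First we run $R=\lceil c\alpha\rceil$ independent copies of the USS circuit of Lemma~\ref{lem:index} in parallel. All copies use the same classically chosen $\eta$, with $\eta+k-1-d(\eta)=n$. Each copy has its own measurement layer $M$ and its own descent number $d$. In the classical stage $C_2$ we read the measured permutations, compute their descent numbers, and select one copy whose descent number is $d(\eta)$. Only this copy gets the lifting additions, and it becomes the index register for the rest of the protocol. The other copies are discarded; they are product with the selected one, so throwing them away does not disturb it.

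Since the copies are independent and each succeeds with probability $p(d(\eta))=1/\alpha\ge 1/k$, the probability that no copy succeeds is $(1-1/\alpha)^R\le e^{-R/\alpha}\le e^{-c}$. The copies act on disjoint qubits and run side by side, so the quantum depth stays $O(1)$. Their ancilla cost is $R\cdot O(k^2\log^2 n)=O(c\alpha k^2\log^2 n)$, where we assume $c\alpha\ge1$ so that the ceiling is absorbed. Choosing the copy is a polynomial-time classical task. We must also make sure the later operations can act on the selected copy: the wiring of $Q_2$ is decided only after $C_2$, which matches the adaptive model in Eq.~\eqref{eq:circuit}. Alternatively, controlled SWAPs conditioned on classical bits can route the chosen copy into a fixed register without increasing the depth.

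Conditioned on success, the selected register holds exactly $\ket{\mathrm{USS}_{n,k}}$. We then apply the filling transformation of Eq.~\eqref{eq:filling}, which uses $O(kn\log n\log\log n)$ ancillae, and the index erasure of Eq.~\eqref{eq:clean_index}, which uses $O(n^2+kn\log k\log\log k)$ ancillae. Both are deterministic, exact, and of constant depth. Summing the costs gives $O(n^2+c\alpha k^2\log^2 n+kn\log n\log\log n)$, with the $kn\log k\log\log k$ term absorbed, and the output is exactly $\ket{D_k^n}$ with overall success probability at least $1-e^{-c}$. The one point that needs care, though it is not difficult, is the selection step. We should confirm that heralding on one copy leaves the chosen register exactly in the USS state, which holds because the copies are independent product states, and that the bound $\alpha\le k$ follows directly from $p(d(\eta))\ge1/k$.
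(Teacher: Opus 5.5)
Your proposal is correct and follows the paper's own argument. The paper obtains this corollary by running $R=\lceil c\alpha\rceil$ independent copies of the USS circuit of Lemma~\ref{lem:index} in parallel (Corollary~\ref{cor:uss}), bounding the failure probability by $(1-1/\alpha)^R\le e^{-c}$, and then applying the same filling and index-erasure steps as in Theorem~\ref{thm:Dicke}, which gives exactly your resource count; your remarks on routing the selected copy and on assuming $c\alpha\ge 1$ to absorb the ceiling are details the paper leaves implicit.
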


Again, a more detailed analysis and numerical results in Appendix~\ref{app:prob} suggests the stronger scaling $\alpha=O(\sqrt{k})$.

\section{From Dicke-State Preparation to Arbitrary Symmetric States}\label{sec:sym}

In this section, we establish a general framework that reduces the preparation of arbitrary symmetric states to unitary Dicke-state preparation. More precisely, we show that any family of unitary circuits for preparing Dicke states can be lifted to a circuit for preparing arbitrary superpositions of Dicke states, with only a polynomial overhead in ancillary qubits. We first outline the high-level structure of the construction.

Let $E$ denote a one-hot index register and $T$ an $n$-qubit target register. The preparation of an arbitrary symmetric state can be decomposed into the following three steps:
\begin{equation}\begin{aligned}
    \ket{0}_E\ket{0}^{\otimes n}_T&\xrightarrow{\hspace{1em}1\hspace{1em}}\sum_{k=0}^{n}a_k\ket{e_k}_E\ket{0}^{\otimes n}_T\\
    &\xrightarrow{\hspace{1em}2\hspace{1em}}\sum_{k=0}^{n}a_k\ket{e_k}_E\ket{D_k^n}_T\\
    &\xrightarrow{\hspace{1em}3\hspace{1em}}\ket{0}_E\sum_{k=0}^{n}a_k\ket{D_k^n}_T,
\end{aligned}\end{equation}
where $\ket{e_k}$ denotes the one-hot encoding of $k$ in an $(n+1)$-qubit register. In step~1, the desired coefficients $\{a_k\}_{k=0}^{n}$ are coherently encoded into the one-hot index register. In step~2, the value stored in this register coherently controls the preparation of the corresponding Dicke state. Finally, step~3 erases the index register, leaving the desired symmetric state in the target register.

To implement the first step, we use the following result from Ref.~\cite{zi2025constant}.

\begin{lemma}\label{lem:any_state}
    \textnormal{\cite{zi2025constant}} An arbitrary $2^n$-qubit quantum state of the form $\sum_{j=0}^{2^n-1}\alpha_j e^{i\theta_j}\ket{e_j}$ can be prepared from the initial state $\ket{0}^{\otimes 2^n}$ using a constant-depth quantum circuit. The circuit has a size of $O(n4^n)$ and utilizes $O(n4^n)$ ancillae.
\end{lemma}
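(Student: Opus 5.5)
The plan is a binary-tree construction that avoids the $O(n)$ sequential depth of the usual level-by-level amplitude splitting. All splitting angles are loaded in parallel into independent node qubits, the one-hot leaf is read off in constant depth, and the node qubits are then cleaned up using only the one-hot register as control.

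\textbf{Tree set-up.} Take a complete binary tree of depth $n$ whose $2^n$ leaves are labelled by $j\in\{0,\ldots,2^n-1\}$. For each internal node $v$, let $\beta_v$ be the fraction of the squared weight $\sum\alpha_j^2$ of its subtree lying in the right child's subtree. Then $\alpha_j=\prod_{v\in\mathrm{path}(j)}\mathrm{amp}_v(b_v(j))$, where $b_v(j)\in\{0,1\}$ is the branch taken at $v$, $\mathrm{amp}_v(0)=\sqrt{1-\beta_v}$ and $\mathrm{amp}_v(1)=\sqrt{\beta_v}$.

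\textbf{Loading and reading the path.}
\begin{enumerate}
\item In one layer, prepare each of the $2^n-1$ node qubits in $\ket{\psi_v}=R_v\ket{0}=\sqrt{1-\beta_v}\ket{0}+\sqrt{\beta_v}\ket{1}$, as a product state.
\item For every leaf $j$, compute the indicator that the node assignment $x$ routes to $j$, i.e.\ $e_{j(x)}$, as an $\mathrm{AND}_n$ of the $n$ (possibly negated) node bits on the path to $j$.
\item Copy each node bit with fan-out to the up to $2^{n-1}$ leaves below it, so each leaf receives its $n$ path bits.
\end{enumerate}
These leaf computations run in parallel in constant depth. After this stage the state is $\sum_x\prod_v\mathrm{amp}_v(x_v)\ket{x}\ket{e_{j(x)}}$.

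\textbf{Disentangling the node register.} This is the crucial step. Grouping by $j$, the on-path nodes sit in fixed basis states $\ket{b_v(j)}$. The off-path nodes are in their untouched product states $\ket{\psi_v}$, but which nodes are off-path depends on $j$, so they are still entangled with the one-hot register. The cleanup proceeds as follows.
\begin{enumerate}
\item For each node $v$, compute the flag $f_v=\mathrm{OR}$ of the one-hot bits of the leaves in $v$'s subtree; $f_v=1$ exactly when $v$ is on the path.
\item Controlled on $\neg f_v$, apply $R_v^{\dagger}$, which maps $\ket{\psi_v}\mapsto\ket{0}$.
\item Controlled on $f_v$, apply $X$ to $v$ iff the one-hot leaf lies in the right subtree. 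Equivalently, apply $X$ controlled by the OR over right-subtree leaves, which resets $\ket{b_v(j)}\mapsto\ket{0}$.
\item Uncompute $f_v$ and the ORs from the one-hot register, which is unchanged throughout.
\end{enumerate}
Because each off-path state $\ket{\psi_v}$ is normalized, the surviving coefficient of $\ket{e_j}$ is exactly $\prod_{v\in\mathrm{path}(j)}\mathrm{amp}_v(b_v(j))=\alpha_j$. A final layer of single-qubit phase gates $\mathrm{diag}(1,e^{i\theta_j})$ on each one-hot qubit inserts the phases.

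\textbf{Resource count and main obstacle.} Each node uses fan-out copies of up to $2^n$ leaf bits and an $\mathrm{OR}$ over $O(2^n)$ inputs, costing $O(2^n\log 2^n)=O(n2^n)$ qubits by Table~\ref{tab:gates}. Over $2^n$ nodes this totals $O(n4^n)$, and the leaf $\mathrm{AND}_n$'s add only $O(2^n n\log n)$. All layers use single-control gates or fan-out-based $\mathrm{AND}/\mathrm{OR}$, so the depth is constant. The step I expect to be delicate is the disentangling step: showing that the controlled $R_v^{\dagger}$ and the controlled resets act on disjoint branches and leave no residual entanglement or $j$-dependent normalization. I would verify this by writing the state branch-by-branch in $j$ before and after, as sketched above. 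Keeping the parallel fan-out copies within the $O(n4^n)$ budget is the main bookkeeping burden.
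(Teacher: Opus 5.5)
Your construction is correct. The paper contains no proof of this lemma to compare against: it quotes the result from Ref.~\cite{zi2025constant} and uses it as a black box in the first step of Theorem~\ref{thm:symm}, with $2^n$ replaced by $2^{\lceil\log_2(n+1)\rceil}$. Your argument is therefore a self-contained substitute rather than a reconstruction.

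The step you flagged as delicate is already settled by your branch-wise state $\sum_j\alpha_j\ket{e_j}\bigotimes_{v\in\mathrm{path}(j)}\ket{b_v(j)}\bigotimes_{v\notin\mathrm{path}(j)}\ket{\psi_v}$. Since $r_v=1$ implies $f_v=1$, the controls $\neg f_v$ and $r_v$ of the two cleanup gates are never active together. So in each branch every node qubit goes exactly to $\ket{0}$, and no $j$-dependent normalization appears because $R_v^\dagger\ket{\psi_v}=\ket{0}$. One presentational fix: the fan-out copies of the node bits must be created before the $\mathrm{AND}_n$ gates and undone after them, so that the displayed state holds.

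Compared with the citation, your route buys the following:
\begin{itemize}
\item It stays entirely within the primitives of Table~\ref{tab:gates}.
\item It makes explicit that all node, copy and flag registers end in $\ket{0}$. This is exactly what Theorem~\ref{thm:symm} needs, because the one-hot register is later used as a coherent control.
\item It proves more than the stated bound. Your $O(n2^n)$-per-node estimate is loose, since a node at depth $d$ only ORs over $2^{n-d}$ leaves. The flags therefore cost $\sum_{d=0}^{n-1}2^d\cdot O\!\left(2^{n-d}(n-d)\right)=O(n^22^n)$. Adding the $O(n2^n)$ fan-out copies and $O(n\log n\,2^n)$ for the leaf $\mathrm{AND}_n$ gates, width and size are $O(n^22^n)$ rather than $O(n4^n)$.
\end{itemize}
In the paper's application, this would replace the $O(n^2\log n)$ term of Theorem~\ref{thm:symm} by $O(n\log^2 n)$, so that theorem's bound becomes $O(n^2+n\xi)$. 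Corollary~\ref{cor:symmetric} is unchanged, since $n\xi$ dominates there. The citation, for its part, buys brevity.
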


Lemma~\ref{lem:any_state} shows that an arbitrary superposition over a one-hot basis can be prepared in constant depth, although the required number of ancillary qubits is exponential in the logarithm of the dimension of the one-hot register. In the present setting, however, the symmetric subspace has dimension only $n+1$. We may therefore set $m=\left\lceil\log_2(n+1)\right\rceil$ and apply Lemma~\ref{lem:any_state} with the amplitudes corresponding to $j>n$ set to zero. The state $\sum_{k=0}^{n}a_k\ket{e_k}$ can thus be prepared in constant depth using $O\left(m4^m\right)=O\left(n^2\log n\right)$ ancillary qubits. The unused one-hot positions remain unoccupied and may be omitted from the subsequent construction.

We can now state the main result of this section.

\begin{theorem}[From Dicke states to arbitrary symmetric states]
\label{thm:symm}
    Suppose that, for any integer $0\leq k\leq n$, there exists a unitary circuit $U_{n,k}$ composed of bounded-fan-in gates and unbounded quantum fan-out gates preparing the Dicke state $\ket{D_k^n}$ exactly with each using at most $O(\xi)$ ancillary qubits, i.e., $U_{n,k}\left(\ket{0}^{\otimes n}\ket{0}^{\otimes \xi}\right)=\ket{D_k^n}\ket{0}^{\otimes \xi}$. Suppose further that each $U_{n,k}$ has depth at most $O(L)$. Then any $n$-qubit symmetric state $\sum_{k=0}^{n}a_k\ket{D_k^n}$ can be prepared exactly by an adaptive circuit of depth $O(L)$ using $O\left(n^2\log n+n\xi\right)$ ancillary qubits.
\end{theorem}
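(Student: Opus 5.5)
The plan follows the three-step outline given before the theorem. Step~1 is already handled: by Lemma~\ref{lem:any_state} with $m=\lceil\log_2(n+1)\rceil$, we prepare $\sum_{k=0}^n a_k\ket{e_k}_E$ in constant depth with $O(n^2\log n)$ ancillae. The substance lies in Step~2, a coherent ``multiplexed'' application of the $U_{n,k}$ controlled by the one-hot register, and in Step~3, erasing $E$.

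For Step~2, the idea is not to control a single circuit but to run all $n+1$ circuits in parallel on separate registers. We allocate $n+1$ work blocks $W_0,\dots,W_n$, each holding $n$ target qubits plus $O(\xi)$ ancillae. Because the $U_{n,k}$ are clean, it suffices to use controlled versions $\Lambda_{e_k}(U_{n,k})$ acting on block $W_k$, all in parallel. A controlled circuit of depth $O(L)$ built from bounded-fan-in and fan-out gates remains depth $O(L)$ once the control bit is copied by fan-out to every gate in the same layer. This needs at most $O(n+\xi)$ copies per layer, and these can be reused across layers or uncomputed. A controlled bounded-fan-in gate is still bounded-fan-in. A controlled fan-out gate equals fan-out from an AND (Toffoli) of the control and the source onto a fresh qubit, followed by uncomputing that qubit. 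Every such gadget is constant depth.

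The result is $\sum_k a_k\ket{e_k}_E\ket{D_k^n}_{W_k}\bigotimes_{k'\neq k}\ket{0}_{W_{k'}}$. We then route the active block into the single output register $T$. For each qubit position $i$ and each $k$, a Toffoli (equivalently a controlled-SWAP) controlled by $e_k$ XORs $W_{k,i}$ into $T_i$. Only one block is nonzero in each branch, so XOR-ing all blocks into $T_i$ through a parity, i.e.\ a fan-out conjugated by Hadamards, gives $T=\ket{D_k^n}$ in branch $k$ and needs no control at all. To clean $W_k$, we afterwards run the controlled inverse: we copy $T$ back into $W_k$ controlled on $e_k$ (which zeroes it, since the two registers agree) in parallel across $k$, with fan-out of the control bits. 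The total ancilla count is $(n+1)(n+O(\xi))$ plus copy overhead, i.e.\ $O(n^2+n\xi)$.

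Step~3 is the step I expect to be the main obstacle: we must erase $E$ coherently, mapping $\ket{e_k}\ket{D_k^n}\mapsto\ket{0}\ket{D_k^n}$. Here $k$ equals the Hamming weight of $T$, so it can be recomputed. The plan is to compute $\mathbb{I}_{|x|=k}$ for each $k$ with the $\mathrm{Exact}_k$ gate from Table~\ref{tab:gates}, each acting on its own fan-out copy of $T$, and to XOR the result into $E_k$. This zeroes $E$ because each branch has exactly one $k$ with $|x|=k$, and $e_k$ has its single $1$ precisely there. We then uncompute the copies. Each $\mathrm{Exact}_k$ uses $O(n\log n)$ qubits, so $n+1$ of them in parallel cost $O(n^2\log n)$, which matches the bound. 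The delicate points are checking that $\mathrm{Exact}_k$, built via Takahashi--Tani from OR-type circuits, leaves its internal ancillae clean on all basis inputs, so that the construction works on superpositions, and ensuring that the intermediate measurements inside the fan-out implementations are deterministic Pauli corrections, so that the overall adaptive circuit is exact. Summing the three steps gives depth $O(L)+O(1)=O(L)$ and $O(n^2\log n+n\xi)$ ancillae.
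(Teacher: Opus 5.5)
Your proof is correct. It follows the paper's overall architecture: one-hot amplitude encoding via Lemma~\ref{lem:any_state}, parallel controlled copies of $U_{n,k}$ on disjoint blocks (built with the same copied-control and Toffoli-then-fan-out gadget), transfer into $T$, and erasure of $E$ by recomputing the Hamming weight. Two sub-steps differ.

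\textbf{Transfer into $T$.} The paper implements a full multiplexed controlled-SWAP $V_qW_qV_q$, with $W_q=\mathcal{H}_qV_q\mathcal{H}_q$, which costs three controlled Toffoli layers. Your copy-then-controlled-uncopy is that same decomposition specialised to this situation:
\begin{itemize}
\item the first $V_q$ acts trivially because $T$ starts in $\ket{0}^{\otimes n}$;
\item the control on $W_q$ can be dropped because all inactive blocks are zero, so a single Hadamard-conjugated fan-out (parity) suffices there.
\end{itemize}
What remains is one controlled layer. A single Toffoli is not itself a controlled-SWAP, as your wording suggests, but your two-stage composite does realise the needed transfer.

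\textbf{Erasure of $E$.} The paper computes $|x|$ once into a binary register with the exact Hammingweight gate, which needs $O(n^2)$ ancillae. It then fans out this $O(\log n)$-bit value and applies $\mathrm{Equal}_j$ into each $E_j$, for $O(n^2)$ ancillae in this step. You instead apply $n+1$ parallel $\mathrm{Exact}_k$ gates to fan-out copies of $T$, which costs $O(n^2\log n)$.

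Your route is conceptually simpler: one primitive per Hamming-weight sector, with no binary counter or equality tests. The paper's route is a $\log n$ factor cheaper in this step. Since Step~1 already costs $O(n^2\log n)$, both give the same final bound $O(n^2\log n+n\xi)$. The paper's choice would matter only if the one-hot amplitude preparation were made cheaper.
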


Theorem~\ref{thm:symm} shows that unitary Dicke-state preparation is sufficient, up to the stated polynomial overhead, for preparing arbitrary symmetric states. Since the Dicke states $\left\{\ket{D_k^n}\right\}_{k=0}^{n}$ form an orthonormal basis of the symmetric subspace, it is enough to prepare the desired amplitudes coherently in an index register and then conditionally apply the corresponding Dicke-state preparation circuits. The theorem therefore provides a generic lifting procedure that converts any family of clean unitary Dicke-state preparation circuits into a preparation circuit for arbitrary symmetric states.

An immediate consequence is that improvements in unitary Dicke-state preparation translate directly into improved protocols for general symmetric-state preparation. In this sense, apart from the overhead required to coherently combine the different Hamming-weight sectors, the complexity of arbitrary symmetric-state preparation is governed by the complexity of the underlying Dicke-state preparation circuits.

Now we give a formal proof of Theorem~\ref{thm:symm}.

\begin{proof}
The first step, $\ket{0}\longmapsto\sum_{k=0}^{n}a_k\ket{e_k}$, can be implemented in constant depth using $O(n^2\log n)$ ancillary qubits, as established by Lemma~\ref{lem:any_state}. To implement the second step coherently, we first make the following observation.

\medskip
\noindent\textbf{From a unitary to its controlled version.}
Suppose that a unitary circuit $U$ acts on $N=O(n+\xi)$ qubits, consists of bounded-fan-in gates and unbounded quantum fan-out gates, and has depth $O(L)$. Then its controlled version $C$-$U$ can be implemented in depth $O(L)$ using $O(n+\xi)$ additional ancillary qubits.

Indeed, $U$ can be decomposed into $O(L)$ layers of gates with disjoint supports. In each layer, the number of gates is at most $O(N)$. We may therefore create $O(N)$ coherent copies of the control qubit using quantum fan-out and use a distinct copy to control each gate in the layer. The controlled version of a bounded-fan-in gate remains a bounded-fan-in gate. Moreover, a controlled fan-out gate can also be implemented in constant depth. Explicitly, the transformation
\begin{equation}
    \ket{c}\ket{x}\ket{y_1,\ldots,y_m}\longmapsto\ket{c}\ket{x}\ket{y_1\oplus cx,\ldots,y_m\oplus cx}
\end{equation}
can be realized by first computing $cx$ into an ancillary qubit using a Toffoli gate, applying a fan-out gate controlled by this ancillary qubit, and then uncomputing it. The control copies can be erased after the controlled circuit has been applied. Hence the depth remains $O(L)$ and the additional width is $O(n+\xi)$.

\medskip
Using this observation, we introduce $n+1$ mutually disjoint $n$-qubit registers $A_0,A_1,\ldots,A_n$ and apply the controlled circuits $C$-$U_{n,k}$ in parallel, with the $k$-th qubit of the one-hot register $E$ serving as the control of $U_{n,k}$. This implements
\begin{equation}\begin{aligned}\label{eq:parallel-controlled-dicke}
    \sum_{k=0}^{n}a_k\ket{e_k}_E\bigotimes_{j=0}^{n}\ket{0}^{\otimes n}_{A_j}\longmapsto
    \sum_{k=0}^{n}a_k\ket{e_k}_E\left(\bigotimes_{j=0}^{k-1}\ket{0}^{\otimes n}_{A_j}\right)\ket{D_k^n}_{A_k}\left(\bigotimes_{j=k+1}^{n}\ket{0}^{\otimes n}_{A_j}\right).
\end{aligned}\end{equation}

The controlled circuits have total depth $O(L)$ and use $O\left(n(\xi+n)\right)=O(n\xi+n^2)$ ancillary qubits, including the $n+1$ registers $A_j$.

\medskip
\noindent\textbf{Transfer Dicke states into the target register.}
We next introduce an $n$-qubit target register $T$, initialized in $\ket{0}^{\otimes n}$, and coherently transfer the selected Dicke state from $A_k$ to $T$:
\begin{equation}\begin{aligned}\label{eq:swap_op}
    &\sum_{k=0}^{n}a_k\ket{e_k}_I\ket{0}^{\otimes n}_T\left(\bigotimes_{j=0}^{k-1}\ket{0}^{\otimes n}_{A_j}\right)\ket{D_k^n}_{A_k}\left(\bigotimes_{j=k+1}^{n}\ket{0}^{\otimes n}_{A_j}\right)\longmapsto
    \sum_{k=0}^{n}a_k\ket{e_k}_I\ket{D_k^n}_T\bigotimes_{j=0}^{n}\ket{0}^{\otimes n}_{A_j}.
\end{aligned}\end{equation}

This transformation is obtained by applying a SWAP between $T$ and $A_j$ controlled by the $j$-th qubit of $E$ for every $j$. Since $E$ is supported entirely on the one-hot subspace, exactly one controlled-SWAP is activated in each computational-basis branch.

We now show that all these controlled-SWAP operations can be implemented in constant depth. For $q\in\{1,\ldots,n\}$, let $T_q$ and $A_{j,q}$ denote the $q$-th qubits of $T$ and $A_j$, respectively, and let $E_j$ denote the $j$-th qubit of $E$. Define
\begin{equation}\begin{aligned}
    V_q&=\prod_{j=0}^{n}\mathrm{CCNOT}(E_j,T_q;A_{j,q}),\\
    W_q&=\prod_{j=0}^{n}\mathrm{CCNOT}(E_j,A_{j,q};T_q),
\end{aligned}\end{equation}
where $\mathrm{CCNOT}(a,b;c)$ denotes a Toffoli gate with controls $a,b$ and target $c$. On the one-hot subspace of $E$, the product of controlled-SWAP gates acting on the $q$-th qubits satisfies
\begin{equation}\label{eq:parallel-cswap-decomposition}
    \prod_{j=0}^{n}\mathrm{CSWAP}(E_j;T_q,A_{j,q})=V_qW_qV_q.
\end{equation}

Indeed, in every basis branch of the one-hot register, all factors except the one corresponding to the nonzero position of $E$ act trivially, and Eq.~\eqref{eq:parallel-cswap-decomposition} reduces to the standard three-Toffoli decomposition of a controlled-SWAP gate. The middle layer $W_q$ can be converted into the same form as $V_q$ by Hadamard conjugation. Let $\mathcal{H}_q=H_{T_q}\prod_{j=0}^{n}H_{A_{j,q}}$. Using the symmetry of the controlled-$Z$ representation of the Toffoli gate, we have
\begin{equation}\label{eq:toffoli-conjugation}
    W_q=\mathcal{H}_q V_q\mathcal{H}_q.
\end{equation}

Consequently, the controlled-SWAP transformation for the $q$-th qubits can be implemented using three layers of gates of the form $V_q$, together with a constant number of Hadamard layers.

To perform each $V_q$ in constant depth, we coherently copy $T_q$ into $n+1$ ancillary qubits using fan-out and apply all the corresponding Toffoli gates in parallel. When all values of $q$ are treated simultaneously, we also copy each qubit $E_j$ sufficiently many times so that it can control the operations on all $n$ target positions. The total number of copies required is $O(n^2)$. These copies are uncomputed after each parallel Toffoli layer and may be reused in the subsequent layers. It follows that all controlled-SWAP operations in Eq.~\eqref{eq:swap_op} can be implemented in constant depth using $O(n^2)$ ancillary qubits. This completes the second step of the high-level construction.

\medskip
It remains to erase the one-hot register $E$. Let $L=\left\lceil\log_2(n+1)\right\rceil$ and introduce an $L$-qubit register $B$. Applying the exact Hammingweight gate to the target register gives
\begin{equation}\label{eq:compute-weight-for-cleaning}
    \sum_{k=0}^{n}a_k\ket{e_k}_E\ket{D_k^n}_T\ket{0}_B
    \longmapsto
    \sum_{k=0}^{n}a_k\ket{e_k}_E\ket{D_k^n}_T\ket{k}_B.
\end{equation}

We then use quantum fan-out to create $n+1$ coherent copies $B_0,\ldots,B_n$ of the binary Hamming-weight register. For each $j\in\{0,\ldots,n\}$, we apply an $\mathrm{Equal}_j$ gate to $B_j$, using the qubit $E_j$ as its target. This implements
\begin{equation}
    \ket{k}_{B_j}\ket{(e_k)_j}_{E_j}
    \longmapsto\ket{k}_{B_j}\ket{(e_k)_j\oplus\mathbb{I}_{k=j}}_{E_j}
    =\ket{k}_{B_j}\ket{0}_{E_j}.
\end{equation}

Applying these equality tests in parallel therefore erases the entire one-hot register:
\begin{equation}
    \sum_{k=0}^{n}a_k\ket{e_k}_E\ket{D_k^n}_T\bigotimes_{j=0}^{n}\ket{k}_{B_j}
    \longmapsto
    \ket{0}^{\otimes(n+1)}_E\sum_{k=0}^{n}a_k\ket{D_k^n}_T\bigotimes_{j=0}^{n}\ket{k}_{B_j}.
\end{equation}

Finally, we undo the fan-out operation and apply the inverse Hammingweight circuit to erase $B$. The exact Hamming-weight computation uses $O(n^2)$ ancillary qubits. Copying its $O(\log n)$-qubit output $n+1$ times requires $O(n\log n)$ qubits, while the parallel equality tests require $O(n\log n\log\log n)$ qubits. Thus, the entire erasure step has constant depth and uses $O(n^2)$ ancillary qubits.

Combining all three steps, the total depth is $O(L)$, and the total number of ancillary qubits is $O(n^2\log n)+O(n\xi+n^2)+O(n^2)=O(n^2\log n+n\xi)$. The final state of the target register is exactly $\sum_{k=0}^{n}a_k\ket{D_k^n}$, which completes the proof.
\end{proof}

Theorem~\ref{thm:symm} provides a general procedure for lifting a family of unitary Dicke-state preparation circuits to a preparation circuit for arbitrary symmetric states. But unfortunately, Our protocol in Section~\ref{sec:dicke} cannot be inserted directly into this framework. Theorem~\ref{thm:symm} requires a clean unitary circuit $U_{n,k}$ that can be applied coherently under the control of the one-hot register. By contrast, the protocol in Theorem~\ref{thm:Dicke} relies on intermediate measurements, classical processing, and post-selection of successful branches. If such a protocol were applied conditionally in a coherent superposition over $k$, its measurement records could become correlated with the Hamming-weight label and thereby destroy the coherence between different Dicke sectors. Thus, the adaptive protocol of Theorem~\ref{thm:Dicke} does not directly satisfy the assumptions of Theorem~\ref{thm:symm}.

Therefore, we use a clean unitary circuit that exactly prepares $\ket{D_k^n}$ for every $0\leq k\leq n$ constructed recently~\cite{joshi2026constantdepthunitarypreparationdicke}. It has a constant depth and uses $O\left(n^2\sqrt{\log n}\right)$ ancillary qubits. Substituting this result into Theorem~\ref{thm:symm} immediately yields the following corollary.

\begin{corollary}[Constant-depth symmetric-state preparation]\label{cor:symmetric}
    For any integer $n\geq 1$ and any complex coefficients $\{a_k\}_{k=0}^{n}$ satisfying $\sum_{k=0}^{n}|a_k|^2=1$, the symmetric state $\sum_{k=0}^{n}a_k\ket{D_k^n}$ can be prepared exactly by a constant-depth quantum circuit using $O\left(n^3\sqrt{\log n}\right)$ ancillary qubits.
\end{corollary}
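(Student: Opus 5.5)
The plan is to instantiate Theorem~\ref{thm:symm} with the clean unitary Dicke-state circuits of Ref.~\cite{joshi2026constantdepthunitarypreparationdicke} and then read off the parameters. For every $0\le k\le n$, that reference supplies a unitary circuit $U_{n,k}$ made of bounded-fan-in gates and unbounded fan-out gates. It prepares $\ket{D_k^n}$ exactly, has constant depth, and uses $O(n^2\sqrt{\log n})$ ancillary qubits.

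\textbf{Verifying the hypotheses.} Before applying the theorem, I will check that these circuits satisfy its hypotheses exactly as stated. Two points need care.
\begin{enumerate}
\item \emph{Cleanness.} The ancillae must be returned to $\ket{0}^{\otimes\xi}$, i.e.\ $U_{n,k}(\ket{0}^{\otimes n}\ket{0}^{\otimes\xi})=\ket{D_k^n}\ket{0}^{\otimes\xi}$, with no garbage left entangled with the target. If the cited construction left garbage, I would uncompute it by the standard compute--copy--uncompute trick. That trick at most doubles the depth and ancilla count, so it does not change the asymptotics.
\item \emph{Uniform bounds.} Both the depth bound and the ancilla bound must hold uniformly over all $k\in\{0,\dots,n\}$, not only over $k\le n/2$. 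For $k>n/2$, I will obtain $\ket{D_k^n}$ by preparing $\ket{D_{n-k}^n}$ and applying a layer of $X$ gates. This adds depth one and no ancillae. The cases $k=0$ and $k=n$ are trivial.
\end{enumerate}
With these checks, I set $L=O(1)$ and $\xi=O(n^2\sqrt{\log n})$.

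\textbf{Applying the theorem.} Theorem~\ref{thm:symm} then gives an exact preparation of $\sum_k a_k\ket{D_k^n}$ in depth $O(L)=O(1)$. The ancilla count is
\[
O(n^2\log n+n\xi)=O(n^2\log n+n^3\sqrt{\log n})=O(n^3\sqrt{\log n}),
\]
since $n^2\log n\le n^3\sqrt{\log n}$ for all $n\ge1$. The normalization $\sum_k|a_k|^2=1$ is exactly the condition under which Lemma~\ref{lem:any_state} can be invoked inside the theorem to prepare $\sum_k a_k\ket{e_k}$, so arbitrary complex coefficients are allowed.

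\textbf{Circuit model.} The output of Theorem~\ref{thm:symm} is stated as an adaptive circuit, but the only adaptivity in the construction is in the fan-out gates. In the model with unbounded fan-out as a primitive gate (the model of Table~\ref{tab:symm}), the result is a constant-depth unitary circuit. If fan-out is instead replaced by the measurement-based implementation of Section~\ref{sec:pre}, the depth stays constant. The step I expect to be the main obstacle is the cleanness and uniformity check on the cited circuits; everything else is direct substitution.
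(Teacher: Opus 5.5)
Your proposal is correct and is exactly the paper's argument: substitute the cited clean constant-depth Dicke circuits, with $L=O(1)$ and $\xi=O(n^2\sqrt{\log n})$, into Theorem~\ref{thm:symm}; your extra checks are sound, namely the $X^{\otimes n}$ reduction for $k>n/2$ (needed because the cited circuits are stated for $k\le n/2$) and the bound $n^2\log n+n\xi=O(n^3\sqrt{\log n})$. The one flaw is in your fallback for a non-clean circuit, which is not needed here: compute--copy--uncompute cannot remove ancilla garbage left alongside a quantum output such as $\ket{D_k^n}$, because that state cannot be copied and copying it in the computational basis before applying $U^{-1}$ leaves entanglement, so cleanness must come from the cited construction itself, as the paper assumes.
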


To the best of our knowledge, no constant-depth exact preparation protocol for arbitrary symmetric states with a better asymptotic ancillary-qubit complexity is currently known.

\section{Discussions}\label{sec:con_dis}

In this work, we have developed a constant-depth adaptive protocol for preparing arbitrary Dicke-$(n,k)$ states through the uniform subset superposition $\ket{\mathrm{USS}_{n,k}}$. A single execution succeeds with probability at least $1/k$ and, conditioned on success, prepares the target state exactly using $O\left(n^2+k^2\log^2 n+kn\log n\log\log n\right)$ ancillary qubits. The failure probability can be suppressed exponentially through independent repetitions without increasing the adaptive quantum depth. We have also established a general lifting framework that converts clean unitary Dicke-state preparation circuits into preparation circuits for arbitrary symmetric states. Combined with recent constant-depth Dicke-state constructions, this gives an exact constant-depth preparation protocol for arbitrary $n$-qubit symmetric states using $O(n^3\sqrt{\log n})$ ancillary qubits.

Several questions remain concerning the optimal resource complexity of the protocol. First, our construction uses $O\left(n^2+k^2\log^2 n+kn\log n\log\log n\right)$ ancillary qubits to prepared Dicke states, while the lower bound of required ancillary qubits is $\Omega(n)$~\cite{liu2025lowdepthquantumsymmetrization}. Whether we can find a better protocol using less ancillary qubits or improve the lower bound of the required ancillary qubits remains open. Second, it is natural to ask whether $\ket{\mathrm{USS}_{n,k}}$ admits a deterministic clean unitary preparation in constant depth with as less ancillary qubits as possible. Such a construction would eliminate the ordering step and, more importantly, allow the USS preparation circuit to be used coherently as a controlled subroutine. Third, although the dyadic-block construction reduces the ancillary complexity of index erasure substantially, it remains open whether the transformation in Eq.~\eqref{eq:clean_index} can be implemented with nearly linear width. This question is closely related to the complexity of exact constant-depth Hamming-weight computation. Complementary lower bounds for constant-depth preparation of the USS and Dicke states would help determine whether the remaining overhead is intrinsic.

The USS may also be useful as an algorithmic primitive beyond Dicke-state preparation. An important next step is to determine whether the ordered-subset representation also permits efficient implementations of the Johnson-graph update primitives used in quantum-walk algorithms~\cite{ambainis2007quantumwalk,childs2005quantumalgorithms,magniez2007triangle}, as well as in nested quantum-walk constructions~\cite{Jeffery2013walk} and the quantum algorithms for topological data analysis~\cite{McArdle2026streamlined}. Such implementations would turn the USS from a state-preparation primitive into a building block for complete combinatorial quantum algorithms.

The lifting framework of Theorem~\ref{thm:symm} suggests a more general coherent-combination principle: whenever an orthonormal family of states is indexed by a register of polynomial dimension and each state admits a clean preparation circuit, superpositions over the family may be constructed by coherently multiplexing these circuits. Extending this perspective to other permutation-invariant sectors, Schur-basis states, or symmetry-resolved many-body states would be worthwhile.

Finally, the practical relevance of the protocol depends on its compilation into concrete experimental architectures. The construction is naturally suited to platforms supporting mid-circuit measurements, classical feedforward, and highly parallel or long-range entangling operations. Its probabilistic nature also permits different implementation strategies. One may repeat the preparation sequentially before the Dicke state enters a coherent computation, increasing the wall-clock preparation time without increasing the peak qubit count, or prepare several copies in parallel and route a successful output onward, trading additional width for reduced latency. A realistic resource analysis should therefore account not only for quantum depth, but also for measurement and feedforward latency, connectivity, qubit reset and reuse, and classical-control bandwidth. It will also be important to study how measurement errors and imperfect comparison or counting operations affect the output fidelity, and whether efficient verification or purification procedures can be incorporated without losing the low-depth advantage. Investigating architecture-specific and fault-tolerant implementations may clarify when adaptive ordering provides a practical advantage over fully unitary Dicke-state preparation.

\bibliographystyle{apsrev}

\bibliography{./tex/bibSymPrep.bib}

\appendix

\section{Analysis of $p(d)$, $\alpha$ and $\eta$ in Lemma~\ref{lem:index} and Corollary~\ref{cor:uss}}\label{app:prob}
In this appendix, we analyze the $p(d)$, $\alpha$ and $\eta$ concerned in the preparation of the USS states in Lemma~\ref{lem:index} and Corollary~\ref{cor:uss} in detail. Our setting is that: for fixed $k$ and $\eta$, the probability of getting descent number $d$ after the measurement layer is $p(d)$, $d(\eta)$ is the most probable descent number, and $p(d(\eta))=1/\alpha$. Firstly, we perform an explicit calculation of $p(d)$, in which we have to introduce the Eulerian number $A(k,d)$ (the number of $d$-descent sequences among all $k!$ permutations of $\{0,1,\ldots,k-1\}$).

\begin{lemma}
    In our current setting, the probability of getting descent number $d$ after the measurement layer in Lemma~\ref{lem:index} for fixed $\eta$ and $k$ is:
    \begin{equation}\label{eq:pd}
        p(d)=\frac{1}{\eta^k}\cdot\binom{\eta+k-1-d}{k}\cdot A(k,d).
    \end{equation}
\end{lemma}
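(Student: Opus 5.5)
We count tuples. Measuring the comparison ancillae is equivalent to sampling a uniform tuple $(j_1,\ldots,j_k)\in\{0,\ldots,\eta-1\}^k$: the $k$ index registers start in an equal-weight superposition, and the comparison outputs are deterministic functions of the tuple. So $p(d)=N_d/\eta^k$, where $N_d$ is the number of tuples whose induced permutation $(r_1,\ldots,r_k)$ has descent number $d$. The induced permutation is the stable sort: ties are broken by the original position.

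The plan is to partition the $\eta^k$ tuples by their permutation $r$ and count each class.

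\textbf{Characterize a class.} For a fixed permutation $r$, the tuples inducing it are exactly those with
\[
j_{r_1}\le j_{r_2}\le\cdots\le j_{r_k},
\]
where the inequality between positions $i$ and $i+1$ is strict whenever $r_i>r_{i+1}$ and weak otherwise. This is the rule already observed in the proof of Lemma~\ref{lem:index}.

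\begin{itemize}
\item \emph{Necessity.} If $j_{r_i}=j_{r_{i+1}}$, the stable sort places the smaller original index first, which forces $r_i<r_{i+1}$. So a descent forces strict inequality.
\item \emph{Sufficiency.} Any tuple satisfying these constraints is sorted by the stable sort into exactly the order $r$. Hence the classes are disjoint and cover all $\eta^k$ tuples.
\end{itemize}

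\textbf{Count one class.} Suppose $r$ has $d$ descents. Apply the same lifting map as in the paper: add $a_i=|\{s<i: r_s<r_{s+1}\}|$ to the $i$-th sorted entry. This is a bijection between
\begin{itemize}
\item sequences $0\le y_1\le\cdots\le y_k\le\eta-1$ that are strict at the $d$ descent positions, and
\item strictly increasing sequences in $\{0,\ldots,\eta+k-2-d\}$.
\end{itemize}
The inverse map subtracts the offsets; it preserves the lower bound $0$ and respects the upper bound because the total offset is $a_k\le k-1-d$. The second set has $\binom{\eta+k-1-d}{k}$ elements.

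This number depends on $r$ only through $d$. Summing over the $A(k,d)$ permutations with $d$ descents gives $N_d=A(k,d)\binom{\eta+k-1-d}{k}$, which is the claimed formula after dividing by $\eta^k$.

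\textbf{Main obstacle and check.} The delicate step is verifying that the lifting map is a genuine bijection onto the full set of $k$-subsets of $\{0,\ldots,\eta+k-2-d\}$. One must check that weak steps become strict while already-strict steps stay strict, and that the endpoints match. As a sanity check, summing over $d$ should recover Worpitzky's identity $\eta^k=\sum_d A(k,d)\binom{\eta+k-1-d}{k}$. This identity is consistent with the partition argument, and I would cite it as confirmation.
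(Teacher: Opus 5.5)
Your proposal is correct and follows the same route as the paper. Both arguments group the $\eta^k$ tuples by their induced (stable-sort) permutation, use the fact that $A(k,d)$ permutations have $d$ descents, and count each class as $\binom{\eta+k-1-d}{k}$ via the lifting map; your necessity/sufficiency and bijection checks spell out what the paper only asserts. One small fix: in the inverse direction, the bound $y_k\le\eta-1$ requires $a_k\ge k-1-d$, not $a_k\le k-1-d$. This holds because $a_k$ equals the number of ascents, so $a_k=k-1-d$ exactly.
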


\begin{proof}
    In $\sum_{j_1,\cdots,j_k=0}^{\eta-1}\ket{j_1}\cdots\ket{j_k}$, we refer to each instance of $\ket{j_1}\cdots\ket{j_k}$ as a J-register. Clearly there are $\eta^k$ of them. After getting an order in Eq.~\eqref{eq:get_order} with the help of Greaterthan gates (we refer to each instance of the order behind the J-register as an order register), the entire state can be rearranged as
    \begin{equation}
        \begin{aligned}
            &\left(\sum\ket{\text{J-register}}\right)\ket{\text{order-register with $\mathrm{des}=0$}}+\cdots+\left(\sum\ket{\text{J-register}}\right)\ket{\text{order-register with $\mathrm{des}=0$}}\\
            +&\left(\sum\ket{\text{J-register}}\right)\ket{\text{order-register with $\mathrm{des}=1$}}+\cdots+\left(\sum\ket{\text{J-register}}\right)\ket{\text{order-register with $\mathrm{des}=1$}}\\
            +&\cdots\\
            +&\left(\sum\ket{\text{J-register}}\right)\ket{\text{order-register with $\mathrm{des}=k-1$}}+\cdots+\left(\sum\ket{\text{J-register}}\right)\ket{\text{order-register with $\mathrm{des}=k-1$}},
        \end{aligned}
    \end{equation}
        
    in which if several J-registers share the same order register, we collect and put them in a summation $\sum\ket{\text{J-register}}$, which is a ``potential'' USS state (a USS before the lifting procedure). Note that the $\mathrm{des}=d$ here does not mean the sequence in the order register has descent number $d$, but the sequence of the indices in the J-register after reordering according to the order register (the sequence of $r_1,\ldots,r_k$ in Eq.~\eqref{eq:index_reorder}) has descent number $d$. Since there is a bijection between the order register and the order of the indices, this way of writing is acceptable, and the total number of the ``potential'' USS states with descent number $d$ is $A(k,d)$. After lifting, the upper bound of $j$ in such potential USS is lifted to $\eta+k-2-d$, so there are $\binom{\eta+k-1-d}{k}$ J-registers in total in each of the potential USS states. Thus, the probability of getting a potential USS with descent number $d$ is characterized by Eq.~\eqref{eq:pd}.
\end{proof}

To verify the normalization condition of $p(d)$, note that we have the Worpitzky equality~\cite{Worpitzky1883}
\begin{equation}
    \sum_{d=0}^{k-1}A(k,d)\cdot\binom{m+d}{k}=m^k.
\end{equation}

Also, since we know the sequence of Eulerian numbers is palindromic, meaning $A(k,d)=A(k,k-1-d)$, we have $\sum_{d=0}^{k-1}p(d)=1$.

\medskip
For the remainder of this appendix, we will focus on analyzing some basic properties and extreme cases of $p(d)$ as well as presenting some numerical results instead of analyzing $p(d)$ thoroughly, which is beyond the main purpose of this work. We can show that there is exactly one maximum point for $p(d)$.
\begin{lemma}
    For fixed $\eta$ and $k$, the function $p(d)$ defined in Eq.~\eqref{eq:pd} is unimodal.
\end{lemma}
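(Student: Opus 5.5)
We need to show that $p(d)=\eta^{-k}\binom{\eta+k-1-d}{k}A(k,d)$, for $d=0,\dots,k-1$, is unimodal. The plan is to write $p(d)$ as a product of two log-concave sequences in $d$. A product of log-concave sequences with no internal zeros is log-concave, and a positive log-concave sequence is unimodal. The prefactor $\eta^{-k}$ plays no role.

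\textbf{Step 1 (Eulerian factor).} The Eulerian numbers $A(k,d)$, for fixed $k$, form a log-concave sequence in $d$ with all terms positive. I would invoke the classical fact that the Eulerian polynomial $\sum_d A(k,d)x^d$ has only real, nonpositive roots (Frobenius). Newton's inequalities then give
$$A(k,d)^2\ge A(k,d-1)A(k,d+1)\Bigl(1+\tfrac1d\Bigr)\Bigl(1+\tfrac1{k-1-d}\Bigr),$$
which is in fact strict log-concavity.

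\textbf{Step 2 (binomial factor).} Set $m=\eta+k-1$ and $b(d)=\binom{m-d}{k}$. Since $\eta\ge n-k+1\ge1$, we have $m-d\ge k$ for every $d\le k-1$, so $b(d)>0$. The ratio $b(d)/b(d-1)=(m-d-k+1)/(m-d+1)=1-k/(m-d+1)$ is nonincreasing in $d$, because $k/(m-d+1)$ grows as $d$ grows. Hence $b$ is log-concave.

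\textbf{Step 3 (combine).} Multiplying the two inequalities gives $p(d)^2\ge p(d-1)p(d+1)$ with $p(d)>0$. Equivalently, the ratio $\rho(d)=p(d)/p(d-1)$ is nonincreasing. So $p$ increases while $\rho(d)\ge1$ and decreases afterward, which is unimodality.

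For the claim that the maximum point is exactly one, I would use that the Eulerian factor is strictly log-concave, so $\rho$ is strictly decreasing. Then $\rho(d)=1$ can hold for at most one $d$, and the maximum is attained at one point, or at two adjacent points in that single tie case.

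\textbf{Main obstacle.} The main obstacle is Step 1, since everything else is an elementary ratio computation. I would cite the real-rootedness of Eulerian polynomials rather than reprove it. If a self-contained argument is preferred, I would try an induction using the recurrence
$$A(k,d)=(d+1)A(k-1,d)+(k-d)A(k-1,d-1).$$
That induction is more delicate, so citing the known result is the cleaner route.
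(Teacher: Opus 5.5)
Your argument is the paper's. The paper writes $p(d+1)/p(d)=\frac{A(k,d+1)}{A(k,d)}\bigl(1-\frac{k}{\eta+k-1-d}\bigr)$ and observes that both factors are nonincreasing in $d$, so $p$ is unimodal. The only difference is sourcing: the paper cites Brenti for log-concavity of the Eulerian numbers, while you derive it from Frobenius real-rootedness and Newton's inequalities.

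One claim in your Step~2 is false as stated. You say $\eta\ge 1$ gives $m-d\ge k$ for every $d\le k-1$. In fact $\binom{\eta+k-1-d}{k}>0$ if and only if $d\le \eta-1$, so positivity on the whole range needs $\eta\ge k$. That condition genuinely fails in this setting. The appendix treats $\eta=1$, and the table uses $\eta=91$ with $k=100$ for $n=150$. In such cases $p(d)=0$ for $\eta\le d\le k-1$, your assertion $p(d)>0$ fails, and $\rho(d)$ is undefined there. The paper's ratio argument is silent on this point as well.

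The fix is immediate. Since every $A(k,d)$ with $0\le d\le k-1$ is positive, the support of $p$ is the initial interval $\{0,\ldots,\min(\eta,k)-1\}$. Run your ratio argument on that interval, and note that trailing zeros do not affect unimodality.

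Your remark on strict log-concavity is correct and more careful than the paper's informal ``exactly one maximum point'': the maximizer is unique except possibly for a tie at two adjacent points.
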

\begin{proof}
    Consider the following quantity
    \begin{equation}
        R(d)=\frac{p(d+1)}{p(d)}=\frac{A(k,d+1)}{A(k,d)}\cdot\left(1-\frac{k}{\eta+k-1-d}\right).
    \end{equation}
    When $R(d)>1$, we have $p(d+1)>p(d)$, corresponding to the increasing interval of $p(d)$, while the case $R(d)<1$ corresponds to the decreasing interval of $p(d)$. We know that the sequence of $A(k,d)$ is log-concave~\cite{Brenti1989UnimodalLA}, meaning that
    \begin{equation}
        A(k,d)^2\ge A(k,d-1)A(k,d+1).
    \end{equation}
    Combined with the fact that $\left(1-\frac{k}{\eta+k-1-d}\right)$ is monotonically decreasing with respect to $d$, we have that $R(d)$ is monotonically decreasing. As a result, $p(d)$ is unimodal.
\end{proof}

Now we discuss the magnitude of $\alpha$ given $\eta$ and $k$. The lemma below deals with the case when $k\ll\eta$ (which is equivalent to $k\ll n$).
\begin{lemma}\label{lem:eta_gg_k}
    For fixed $\eta$ and $k$ with $k\ll \eta$, $d(\eta)$ is roughly $\lfloor\frac{k-1}{2}\rfloor$ and $\alpha$ is of the order $O(\sqrt{k})$. As a result, $\eta$ can be roughly taken as $n-\lceil\frac{k-1}{2}\rceil$ in this case.
\end{lemma}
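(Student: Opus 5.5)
The plan is to compare $p(d)$ in Eq.~\eqref{eq:pd} with the Eulerian distribution itself. For $k\ll\eta$ the binomial factor is nearly flat in $d$. Writing $m=\eta-1-d$, we have
\begin{equation}
\frac{1}{\eta^k}\binom{\eta+k-1-d}{k}=\frac{1}{k!}\prod_{i=1}^{k}\frac{\eta-1-d+i}{\eta}
=\frac{1}{k!}\exp\Bigl(\sum_{i=1}^{k}\log\Bigl(1+\frac{i-1-d}{\eta}\Bigr)\Bigr).
\end{equation}
Since $0\le d\le k-1$, each term satisfies $|i-1-d|/\eta\le k/\eta$. The exponent is therefore $\frac{1}{\eta}\sum_i(i-1-d)+O(k^3/\eta^2)=\frac{k(k-1-2d)}{2\eta}+O(k^3/\eta^2)$.

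First, I handle the regime $k^2\ll\eta$, which is how I read ``$k\ll\eta$''. There the prefactor equals $\frac{1}{k!}(1+o(1))$ uniformly in $d$, so
\begin{equation}
p(d)=\frac{A(k,d)}{k!}(1+o(1)).
\end{equation}
In this regime the mode of $p$ is governed by the Eulerian numbers. The sequence $A(k,\cdot)$ is palindromic and, by the log-concavity already used in the unimodality lemma, unimodal, so its maximum sits at the centre $d=\lfloor (k-1)/2\rfloor$; for even $k$ there are two central values and the prefactor tilt slightly favours the smaller one. Together with unimodality of $p$ and the ratio $R(d)$, this should show that $d(\eta)$ lies at or immediately next to $\lfloor (k-1)/2\rfloor$. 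If $k^2/\eta$ is only bounded rather than small, the tilt $\exp(-kd/\eta)$ shifts the mode by at most $O(1)$. I would verify this through $R(d)$, using that $A(k,d+1)/A(k,d)$ near the centre is $1-\Theta(1/k)$ per step.

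Second, I bound $\alpha$. The descent number of a uniformly random permutation of $k$ elements has mean $(k-1)/2$ and variance $(k+1)/12$, and it satisfies a local central limit theorem (Eulerian polynomials have only real roots, so by Harper's method the distribution is a sum of independent Bernoulli variables). Hence
\begin{equation}
\max_d \frac{A(k,d)}{k!}=\Theta\bigl(1/\sqrt{k}\bigr),
\end{equation}
and with the $(1+o(1))$ comparison this gives $\alpha=1/p(d(\eta))=O(\sqrt{k})$. A direct alternative uses only real-rootedness: Darroch's theorem puts the mode within one of the mean, and a Chebyshev argument on a window of width $O(\sqrt{k})$ then bounds the maximum from below. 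I would take this route if the local CLT feels too heavy.

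Finally, for the choice of $\eta$, the construction requires $\eta+k-1-d(\eta)=n$. Substituting $d(\eta)=\lfloor (k-1)/2\rfloor$ gives
\begin{equation}
\eta=n-k+1+\lfloor (k-1)/2\rfloor=n-\lceil (k-1)/2\rceil.
\end{equation}
Because $d(\eta)$ is stable under the small change of $\eta$ (the mode does not move for $k^2\ll\eta$), this choice is self-consistent. I expect the main obstacle to be pinning down the mode exactly rather than ``roughly'', especially for even $k$ and when $k^2$ is comparable to $\eta$. I would state the result with that qualification, matching the word ``roughly'' in the lemma, and settle the exact mode by the explicit sign of $R(d)-1$ at the two central values.
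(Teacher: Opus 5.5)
Your proposal follows the paper's argument: replace $p(d)$ by the Eulerian distribution $A(k,d)/k!$, locate its mode at $\lfloor (k-1)/2\rfloor$ via palindromicity, bound the central probability below by order $1/\sqrt{k}$ to get $\alpha=O(\sqrt{k})$, and solve $\eta+k-1-d(\eta)=n$. The proof is correct and somewhat more careful than the paper's, and it differs only in its tools: your exact product expansion replaces Stirling's formula and makes explicit that the approximation is uniform in $d$ only when $k^2\ll\eta$, a condition the paper's Stirling step also tacitly needs; and a local CLT or a Chebyshev bound with variance $(k+1)/12$ replaces the Carlitz et al.\ asymptotic.
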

\begin{proof}
    In the situation $k\ll\eta$, we have:
    \begin{equation}\begin{aligned}
        p(d)&=\frac{1}{\eta^k}\cdot\frac{(\eta+k-1-d)!}{(\eta-1-d)!}\cdot\frac{A(k,d)}{k!}\\
        &\approx\frac{1}{\eta^k e^k}\cdot\frac{(\eta+k-1-d)^{\eta+k-1-d}}{(\eta-1-d)^{\eta-1-d}}\cdot\frac{A(k,d)}{k!}\\
        &=\frac{1}{e^k}\cdot\left(1+\frac{k-1-d}{\eta}\right)^k\cdot\left(1+\frac{k}{\eta-1-d}\right)^{\eta-1-d}\cdot\frac{A(k,d)}{k!}\\
        &\approx\frac{A(k,d)}{k!},
    \end{aligned}\end{equation}
    where the Stirling's formula has been used. As we have mentioned before, the sequence of Eulerian numbers is palindromic and its maximum is taken at $\mathrm{des}=\lfloor\frac{k-1}{2}\rfloor$, so $d(\eta)$ in this case is roughly $\lfloor\frac{k-1}{2}\rfloor$ and $\eta$ is roughly $n-\lceil\frac{k-1}{2}\rceil$. To calculate the magnitude of $\alpha$, note that we have the following asymptotic formula~\cite{carlitz1972asymptotic}:
    \begin{equation}
        \frac{1}{k!}A(k,[x_k])=\sqrt{\frac{6}{\pi(k+1)}}\cdot e^{-\frac{x^2}{2}}+O(k^{-\frac{3}{4}}),
    \end{equation}
    where $x_k=\sqrt{\frac{1}{12}(k+1)}\cdot x+\frac{1}{2}(k+1)$. Let $x=0$, we have $p(d(\eta))\approx A(k,\frac{k+1}{2})/k!\approx\sqrt{\frac{6}{\pi(k+1)}}$. Thus we can assert $\alpha=O(\sqrt{k})$.
\end{proof}

Another extreme case when $\eta\ll k$ is rather simple. Taking $\eta=1$, we have $p(d)=\binom{k-d}{k}\cdot A(k,d)$, which takes the value 0 except for $d=0$, where $p(0)=1$, leading to $\alpha=1$. Now we have established a basic expectation about the image of $p(d)$ with $\eta$ in different ranges: let $k$ be fixed and $\eta$ increase from 1 to $\infty$, the peak of $p(d)$ should gradually move from $d=0$ to $d=\lfloor\frac{k-1}{2}\rfloor$, and $\alpha$ should increase from 1 to $O(\sqrt{k})$. Although we believe this fact can be proven by some detailed mathematical analysis, such effort is beyond the purpose of this paper. Instead, we present some numerical results in Fig.~\ref{fig:pd_eta} which support our conjecture.

\begin{figure}[htbp]
    \centering
    \captionsetup{justification=justified, singlelinecheck=false}
    \begin{subfigure}{0.32\textwidth}
        \centering
        \includegraphics[width=\linewidth]{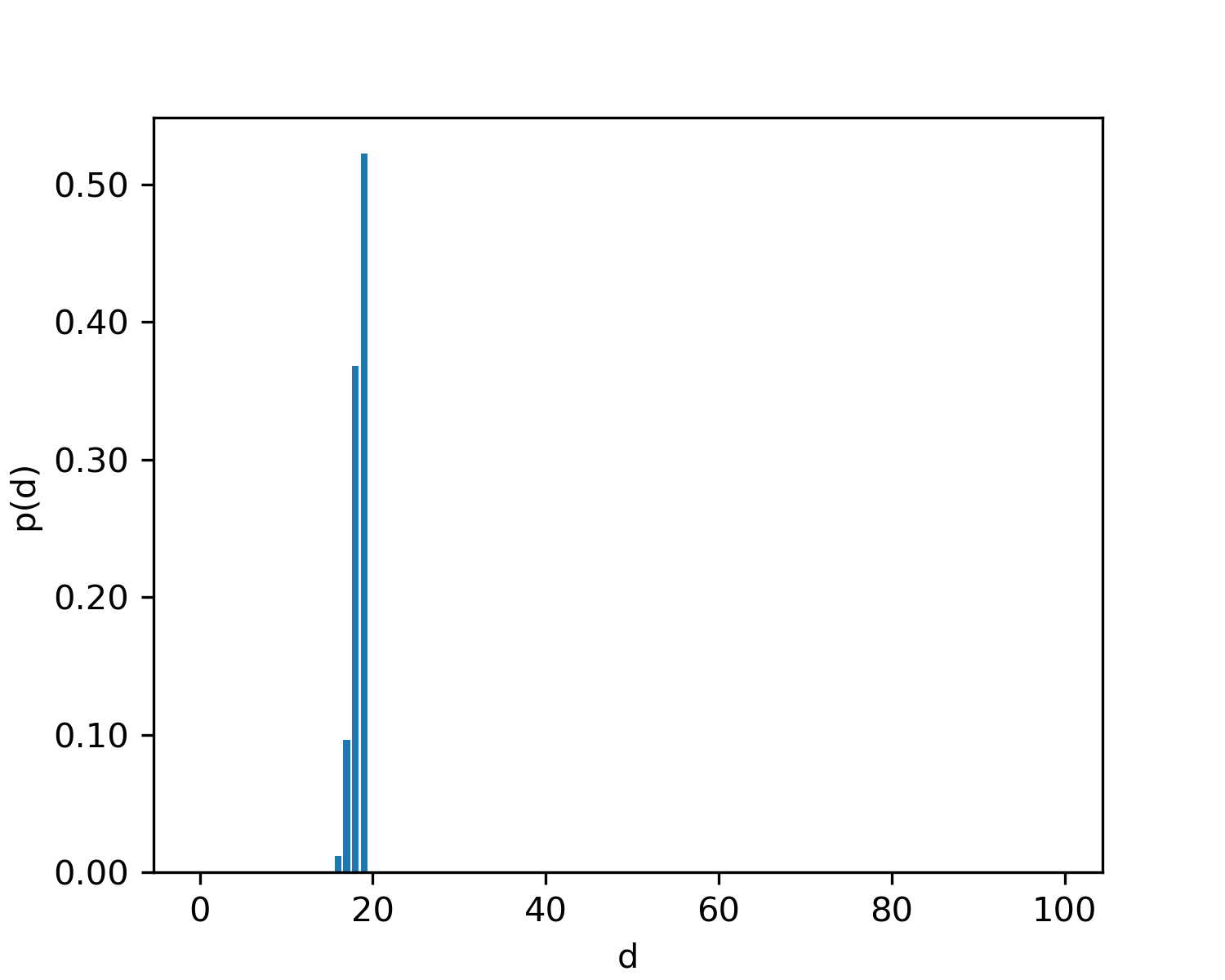}
        \caption{$\eta=20$}
    \end{subfigure}
    \hfill
    \begin{subfigure}{0.32\textwidth}
        \centering
        \includegraphics[width=\linewidth]{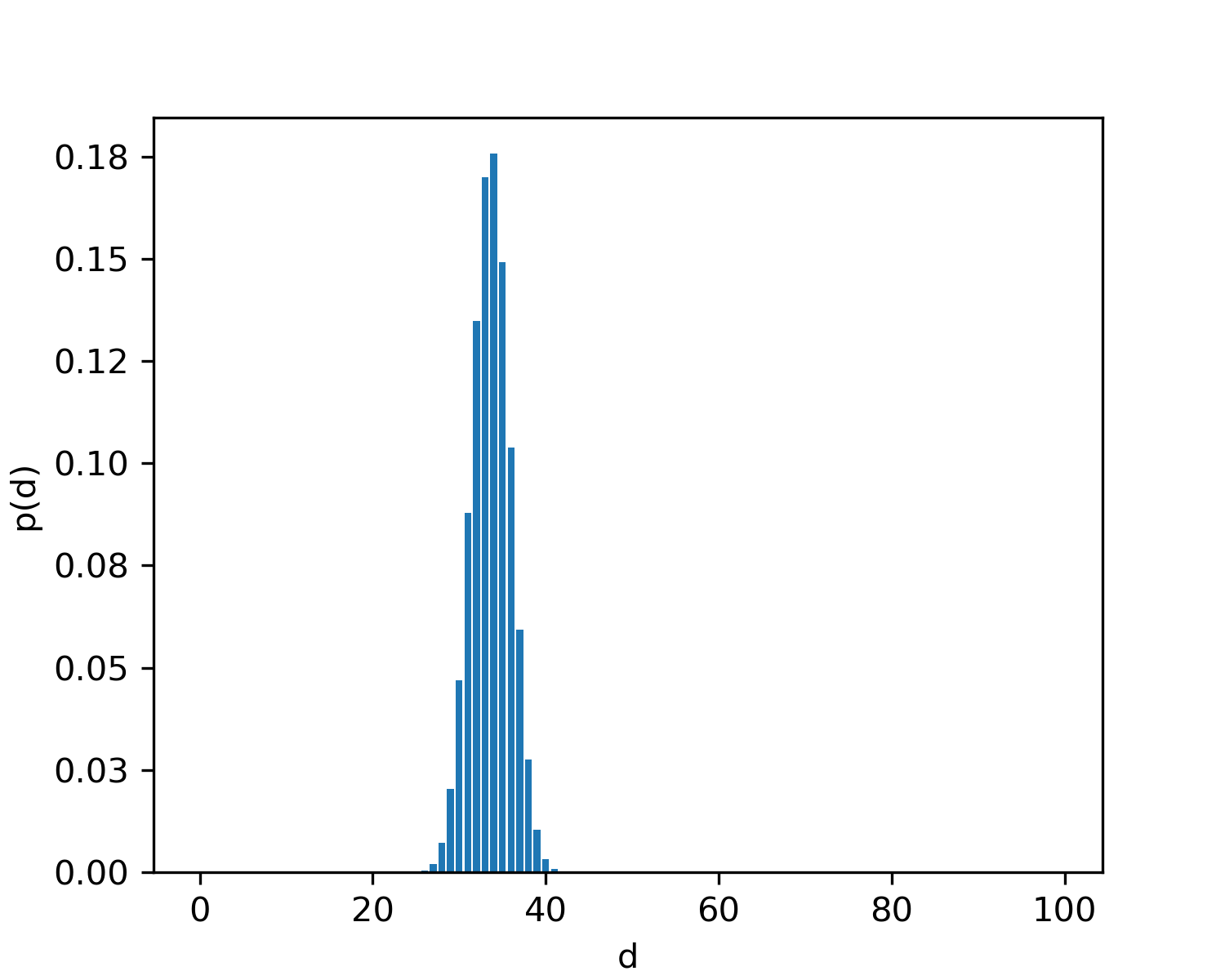}
        \caption{$\eta=50$}
    \end{subfigure}
    \hfill
    \begin{subfigure}{0.32\textwidth}
        \centering
        \includegraphics[width=\linewidth]{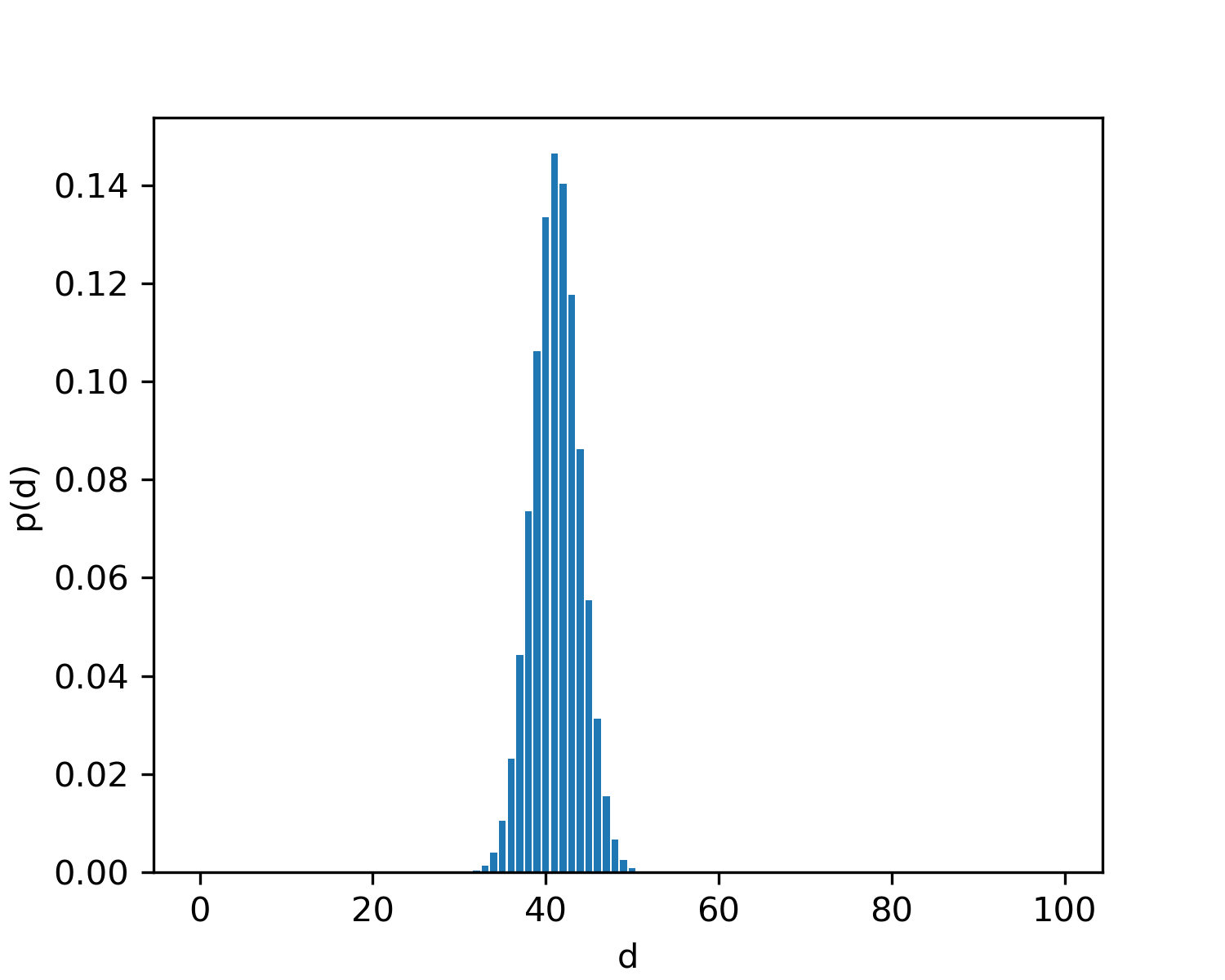}
        \caption{$\eta=100$}
    \end{subfigure}

    \vspace{0.5cm}

    \begin{subfigure}{0.32\textwidth}
        \centering
        \includegraphics[width=\linewidth]{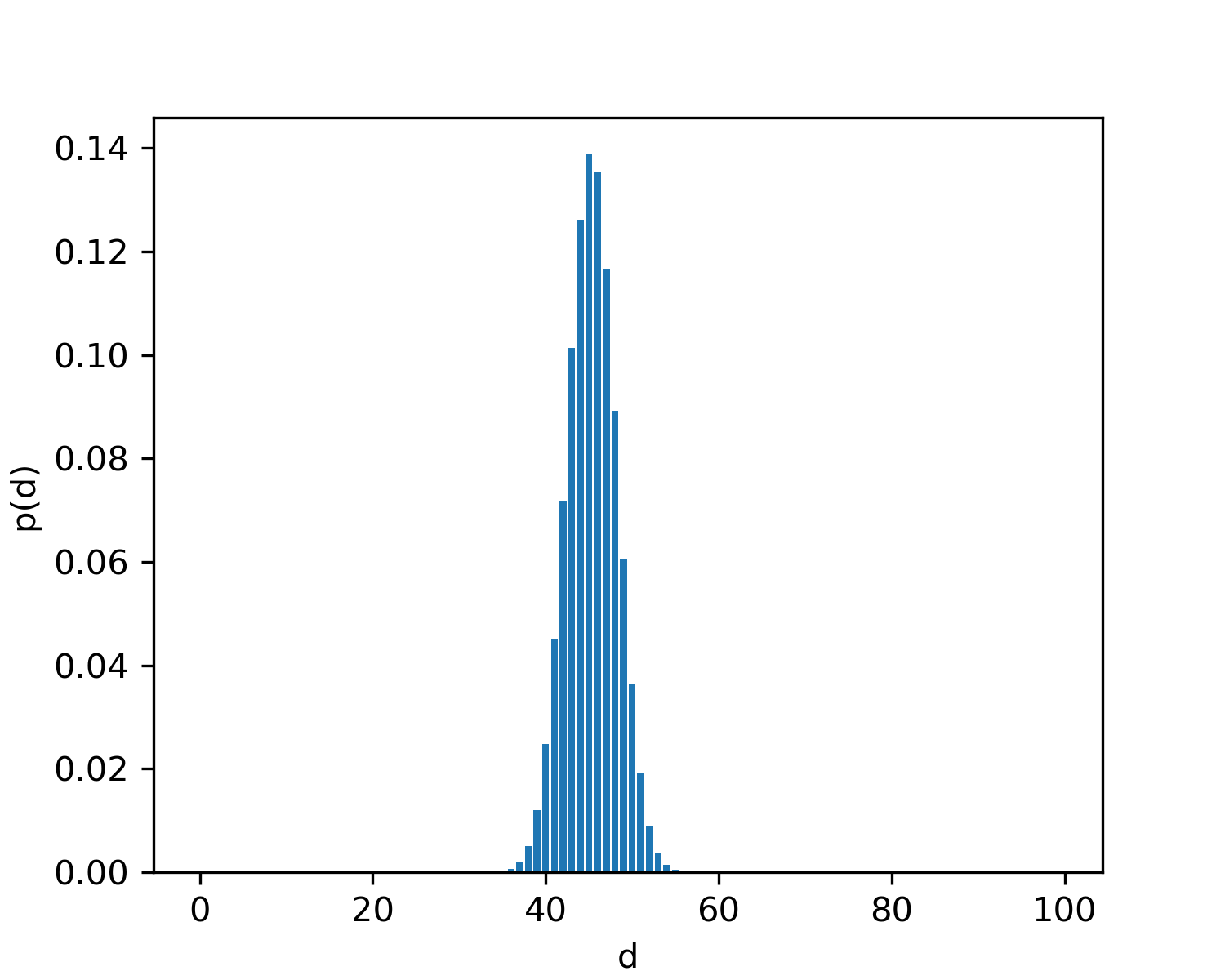}
        \caption{$\eta=200$}
    \end{subfigure}
    \hfill
    \begin{subfigure}{0.32\textwidth}
        \centering
        \includegraphics[width=\linewidth]{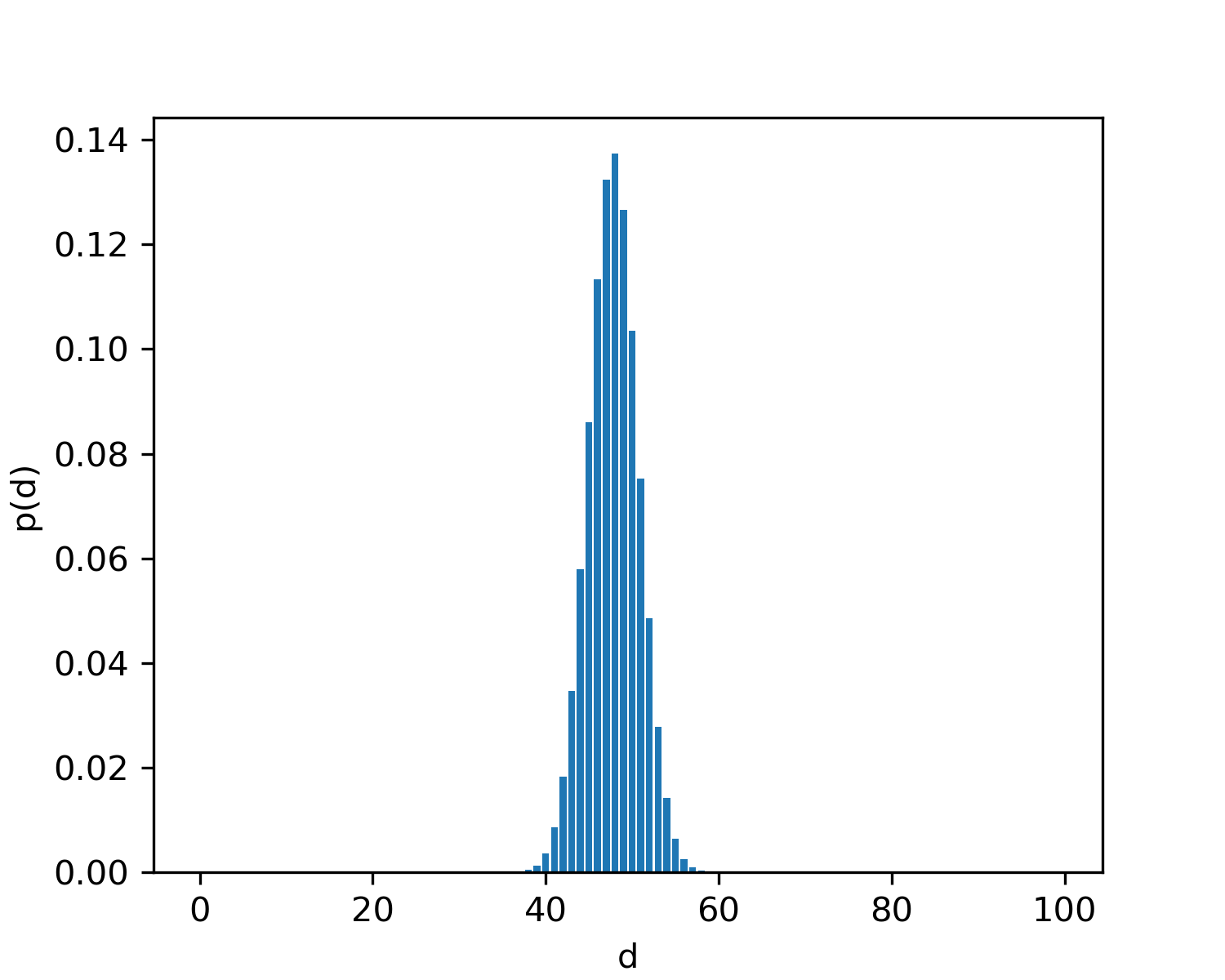}
        \caption{$\eta=500$}
    \end{subfigure}
    \hfill
    \begin{subfigure}{0.32\textwidth}
        \centering
        \includegraphics[width=\linewidth]{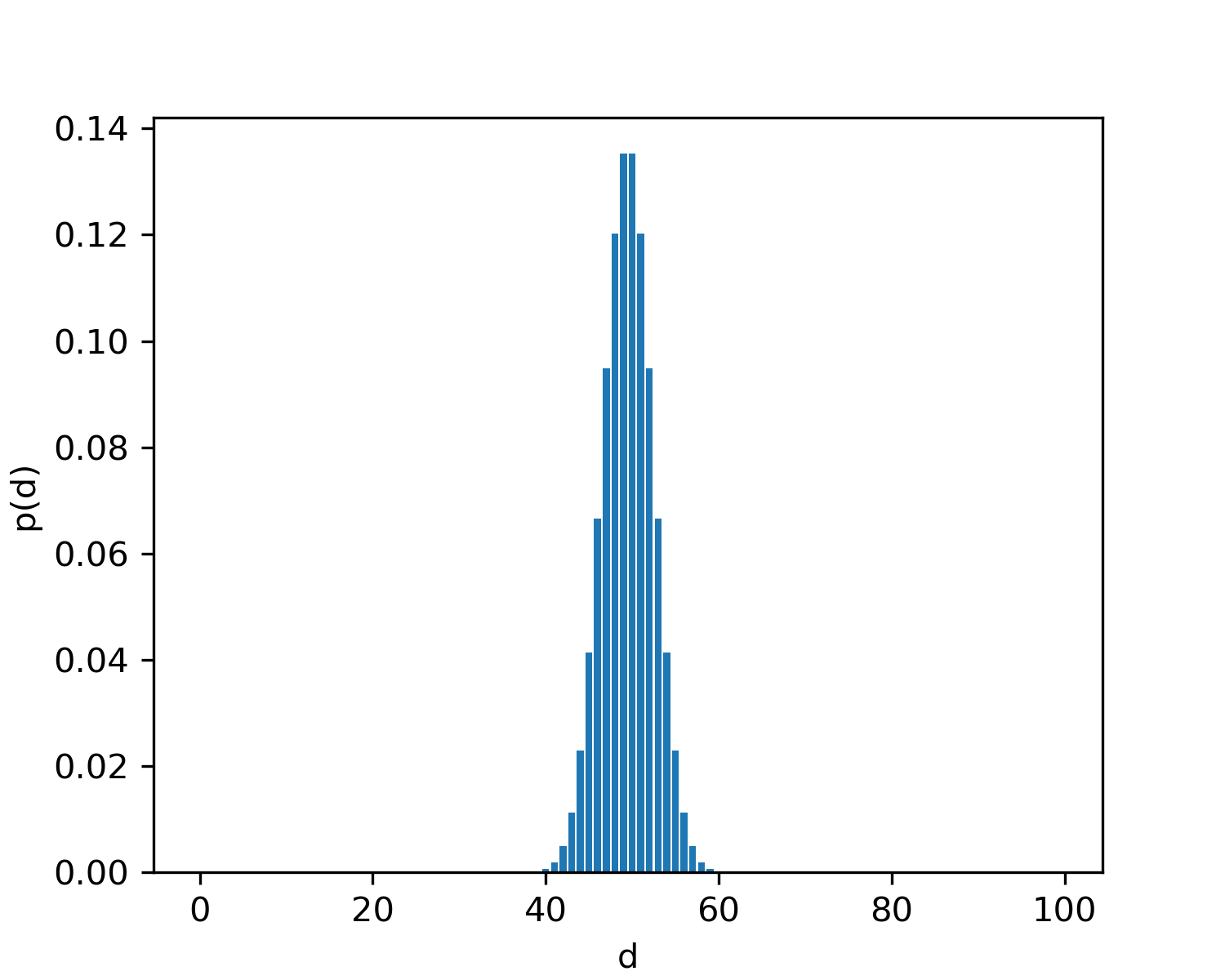}
        \caption{$\eta=\infty$}
    \end{subfigure}
    \caption{Fix $k=100$, the images of $p(d)$ with different values of $\eta$. As $\eta$ increases from 20 to $\infty$ ($\eta=\infty$ is drawn by $p(d)=\frac{A(k,d)}{k!}$), the peak gradually moves from left to right and eventually stops at $\frac{k}{2}$. The value of $p(d(\eta))$ also decreases to roughly $\sqrt{\frac{6}{\pi k}}\approx0.138$, which is consistent with our analysis in Lemma~\ref{lem:eta_gg_k}.}
    \label{fig:pd_eta}
\end{figure}

Lastly, we give a brief comment on how to select $\eta$ according to $n$ and $k$ at the beginning of Lemma~\ref{lem:index} or Corollary~\ref{cor:uss} using polynomial-time classical computation. The method is quite straightforward: under the constraint $\eta+k-1-d(\eta)=n$, we know $\eta=n-k+1+d(\eta)\in[n-k+1,n]$, so all we have to do is to scan $\eta$ from $n-k+1$ to $n$, calculate $d(\eta)$ and see whether $\eta+k-1-d(\eta)$ matches $n$. Note that since the peak of $p(d)$ has a certain width, some suboptimal solutions of $d\approx d(\eta)$ is acceptable if there happens to be no exact match. Also, since we have the following recursion formula for the Eulerian number:
\begin{equation}
    A(k,d)=(k-d)A(k-1,d-1)+(d+1)A(k-1,d),
\end{equation}
the total computational load is acceptable. In practice, we have the following two trains of thought to design the algorithm:
\begin{itemize}
    \item Maximize $p(d)$: for each $\eta\in[n-k+1,n]$, calculate $\bar{d}=\eta+k-1-n$ and $p(\bar{d})$. Select $\bar{\eta}$ and the corresponding $\bar{d}$ with the largest $p(\bar{d})$.
    \item Minimize $\eta$ and time consumption: set a threshold for $p(d)$ ($p_\text{th}=\frac{1}{\sqrt{k}}$ for instance). For $\eta\in[n-k+1,n]$, calculate $\tilde{d}=\eta+k-1-n$ and $p(\tilde{d})$. Once $p(\tilde{d})>p_\text{th}$, immediately return $\tilde{\eta}$ and the corresponding $\tilde{d}$.
\end{itemize}

Again, some numerical results are given in Table~\ref{tab:n_k_eta}.

\begin{table}[t]
\centering
\caption{The selection of $\eta$ given $n$ and $k$.}
\label{tab:n_k_eta}

\renewcommand{\arraystretch}{1.25}
\setlength{\tabcolsep}{12pt}

\resizebox{\linewidth}{!}{
\begin{tabular}{c|cccc|c|cccc}
\hline
$n$ & 150 & 150 & 200 & 500 & $n$ & 150 & 150 & 200 & 500 \\
$k$ & 50 & 100 & 100 & 100 & $k$ & 50 & 100 & 100 & 100 \\
\hline
$\bar{\eta}$ & 124 & 91 & 145 & 449 & $\tilde{\eta}$ & 123 & 89 & 143 & 447 \\
$\bar{d}$ & 23 & 40 & 44 & 48 & $\tilde{d}$ & 22 & 38 & 42 & 46 \\
$p(\bar{d})$ & 0.194 & 0.147 & 0.141 & 0.137 & $p(\tilde{d})$ & 0.182 & 0.106 & 0.120 & 0.118 \\
$\bar{\alpha}=1/p(\bar{d})$ & 5.16 & 6.79 & 7.09 & 7.32 & $\tilde{\alpha}=1/p(\tilde{d})$ & 5.49 & 9.41 & 8.34 & 8.47 \\
$d(\bar{\eta})$ & 23 & 40 & 44 & 48 & $d(\tilde{\eta})$ & 23 & 40 & 44 & 48 \\
$p(d(\bar{\eta}))$ & 0.194 & 0.147 & 0.141 & 0.137 & $p(d(\tilde{\eta}))$ & 0.194 & 0.149 & 0.141 & 0.137 \\
\hline
\end{tabular}
}

\vspace{4pt}
\begin{minipage}{\textwidth}
\footnotesize
\textit{Notes.}
In the left side of this table we calculate $\bar{\eta}$ and the corresponding $\bar{d}$ by maximizing $p(d)$, where $d(\bar{\eta})$ and $p(d(\bar{\eta}))$ are the optimal descent number and the corresponding maximum probability given $\bar{\eta}$. In the right side of this table we calculate $\tilde{\eta}$ and the corresponding $\tilde{d}$ by minimizing $\eta$ and time consumption. Note that given $n=150$, there is no essential difference between $k=50$ and $k=100$ (they refer to the same Dicke state up to transversal Pauli $X$). Although for $k=50$, its $p(\bar{d})$ ($p(\tilde{d})$) is larger, the $\bar{\eta}$ ($\tilde{\eta}$) corresponding to $k=100$ is notably smaller. This suggests a potential trade-off in choosing $k$ or $n-k$ in the protocol of Lemma~\ref{lem:index}.
\end{minipage}

\end{table}

\end{document}